\documentclass[10pt, twocolumn, comsoc]{IEEEtran}

\usepackage[T1]{fontenc}

\usepackage{graphicx,epsfig}
\usepackage[noadjust]{cite}
\usepackage{mcite}
\usepackage{amsfonts,helvet}
\usepackage{fancyhdr}
\usepackage{threeparttable}
\usepackage{epsf,epsfig}
\usepackage{amsthm}
\usepackage{amsmath}
\usepackage{siunitx}
\usepackage{amssymb}
\usepackage{stfloats}

\usepackage{dsfont}
\usepackage[caption=false,font=footnotesize]{subfig}
\usepackage{color}
\usepackage{enumerate}
\usepackage{gensymb}
\usepackage{cancel}
\usepackage{lipsum}
\usepackage{mathtools}
\usepackage{cuted}
\usepackage{bbm}
\usepackage[linesnumbered,ruled]{algorithm2e}
\usepackage[colorlinks=true, linkcolor=blue]{hyperref}

\usepackage{algorithmic}

\newtheorem{lemma}{Lemma}

\newtheorem{remark}{Remark}

\usepackage{eucal}
\usepackage{booktabs}

\usepackage{multirow}

\begin{document}

\title{Fronthaul Compression for Uplink Cloud-RAN with Finite-Alphabet Inputs:\\ A Reverse Mercury/Waterfilling Approach}

    \author{Subin~Shin, Jaehoon~Lee, Seok-Hwan~Park, and Jeonghun~Park
\thanks{
This work was supported in part by the Institute of Information \& Communications Technology Planning \& Evaluation (IITP) grant funded by the Korea government (MSIT) (No. RS-2024-00395824, Development of Cloud virtualized RAN (vRAN) system supporting upper-midband; No. RS-2024-00404972, Development of 5G-A vRAN Research Platform; No. RS-2024-00435652, 6GARROW: 6G AI-native integrated RAN-Core networks), and in part by the 6GARROW project funded by the Smart Networks and Services Joint Undertaking (SNS JU) under the European Union’s Horizon Europe research and innovation programme (Grant Agreement No. 101192194).
S. Shin, J. Lee and J. Park are with the School of Electrical and Electronic Engineering, Yonsei University, Seoul 03722, South Korea (e-mail:{\texttt{ sbshin@yonsei.ac.kr, jeje5934@gmail.com, jhpark@yonsei.ac.kr}}). 
S.-H. Park is with the School of Electrical Engineering, Hanyang University, Ansan 15588, South Korea (e-mail:{\texttt{ seokhwanpark@hanyang.ac.kr}}).
}
}

\maketitle \setcounter{page}{1} 
\begin{abstract}

The cloud radio access network (C-RAN) mitigates inter-cell interference by jointly processing the observations of distributed remote units (RUs) at a centralized unit (CU), but limited fronthaul capacity forces each RU to compress its received signal. Under transform-compress-forward, an RU transforms its signal and quantizes the resulting coefficients, with bit allocation distributing a finite bit budget across them. Classical reverse waterfilling assumes Gaussian sources, yet practical finite-alphabet symbols carry mutual information that saturates at $\log_2 M$, leaving bit allocation for such inputs unresolved. We address this by formulating bit allocation as maximizing the finite-alphabet generalized mutual information (GMI) achieved after linear MMSE (LMMSE) detection at the CU. Via the I-MMSE relation, this yields a fixed-point update whose converged solution decomposes into a vessel height, a shared water level, and a finite-alphabet mercury level; we term it {reverse mercury/waterfilling} (RMWF). Numerical results show that RMWF sustains end-to-end rate under tight fronthaul budgets and remains robust under antenna scaling, which is increasingly consequential as antenna counts outpace fronthaul capacity in modern C-RAN.
\end{abstract}

\section{Introduction}

Conventional cellular networks process each base station's received signals
independently, treating transmissions from neighboring cells as interference. As networks densify, this inter-cell interference (ICI) becomes a dominant bottleneck. 
The cloud radio access network (C-RAN) architecture~\cite{gesbert2010multi,peng2015fronthaul} addresses this by forwarding observations from distributed remote units (RUs) to a centralized unit (CU), where joint processing mitigates ICI at the network level. 
This cooperative gain, however, is bounded by the fronthaul links between the RUs and the CU: finite capacity forces each RU to compress its observation before forwarding, introducing distortion that degrades performance. 
Efficient fronthaul compression is, therefore, central to realizing the performance gains promised by C-RAN.

From an information-theoretic perspective, C-RAN fronthaul compression has been studied through 
distributed source-coding formulations under Gaussian signaling. 
Specifically, in the uplink, the key idea for approaching the network capacity is Wyner--Ziv coding, in which each RU compresses its observation by exploiting the previously decompressed signals at the CU as side information; the CU then performs joint decoding of the user messages over the aggregated reconstructions~\cite{sanderovich2009uplink, park2013robust, zhou2014optimized}. 
Realizing this scheme requires jointly designed vector-quantized codebooks whose encoding complexity grows exponentially with the block length, making it prohibitive for practical fronthaul, even before the binning structure that Wyner--Ziv coding additionally demands.



To circumvent the complexity of vector quantization, a widely used practical family of fronthaul compression schemes follows the transform-compress-forward paradigm \cite{liu2015optimized, liu2019twotimescale, zhang2021quantization, qiao2024meta}.
In this paradigm, each RU first applies a transform that maps its received signal to a decorrelated, dimension-reduced representation, then quantizes each coefficient via low-complexity scalar quantization under the fronthaul budget. 
Finally, each quantized representation is forwarded to the CU. 
On the transform side, the Karhunen--Lo\`eve transform (KLT) is a suitable choice, as it decorrelates the received signal along its statistical eigenbasis and concentrates energy on a few dominant coefficients \cite{liu2015optimized, wiffen2020dimension, wiffen2021distributed}.


Once the KLT is adopted as the transform, the remaining crucial design question is how to allocate a finite bit budget across the coefficients. This allocation directly governs how the limited fronthaul capacity is translated into end-to-end rate.
For Gaussian sources, a closely related question of sum-distortion minimization under a total bit budget admits a classical solution via reverse waterfilling (RWF)~\cite{CoverThomas}, which assigns more bits to high-variance coefficients and fewer bits to low-variance ones, according to the logarithm of the source variances.
This scheme, however, relies on the Gaussian source assumption, which does not hold in modern communication systems. In actual deployments, user data is drawn from finite constellations such as $M$-QAM, so what each RU forwards is a noisy observation of finite-alphabet symbols rather than of a Gaussian source. This makes a clear difference at the bit allocation stage: Gaussian mutual information (MI) grows logarithmically with the signal-to-noise ratio (SNR) without bound, whereas finite-alphabet MI saturates at $\log_2 M$ once the constellation is reliably resolved.

A parallel Gaussian-vs-finite-alphabet question has, in fact, been resolved on the power allocation side. Classical waterfilling (WF) distributes transmit power across parallel Gaussian channels~\cite{CoverThomas}, and for finite-alphabet inputs, mercury/waterfilling (MWF) modifies this scheme by injecting a constellation-dependent mercury level that captures the finite-alphabet MI saturation~\cite{LozanoTulinoVerdu}. On the bit allocation side, however, a finite-alphabet counterpart that captures the MI saturation remains an open problem, even though this is precisely the operating regime of C-RAN fronthaul compression with finite-alphabet signaling. Motivated by this missing corner, this work proposes a bit allocation scheme that explicitly accounts for finite-alphabet inputs.

\subsection{Related Works}
\label{sec:related}

{\bf{Gaussian Compress-and-Forward:}}
A foundational line of work characterizes the achievable rates over finite-capacity RU-CU links under Gaussian signaling; a comprehensive treatment of the underlying signal-processing and network-information-theoretic principles is provided in~\cite{park2014fronthaul}.
The uplink direction was first studied through compress-and-forward formulations~\cite{sanderovich2009uplink} and subsequently refined by Wyner--Ziv style robust compression~\cite{park2013robust}, quantization noise covariance design~\cite{zhou2014optimized}, and optimal joint compression-decoding strategies~\cite{zhou2016optimal}. The downlink direction was developed through joint precoding with multivariate compression~\cite{park2013joint} and later through Marton-based generalized compression~\cite{patil2019generalized}, with uplink-downlink duality established in~\cite{liu2021duality}.

Building on the same Gaussian abstraction, subsequent works have embedded fronthaul compression into broader network-optimization problems, including joint beamforming and quantization noise covariance design for the uplink~\cite{zhou2016fronthaul}, hybrid data-sharing/compression for the downlink~\cite{patil2018hybrid}, and sparse joint transmission under limited fronthaul capacity~\cite{han2022sparse}. These works treat the bit budget at the abstraction level of test channels, covariance matrices, or fronthaul capacity variables under Gaussian signaling, and an explicit bit allocation scheme for finite-alphabet signals does not arise within this line.

{\bf{Transform-Compress-Forward and Bit Allocation:}}
A separate line of work makes the transform and scalar-quantization stages explicit, with the bit allocation across transformed coefficients emerging as the central design variable. 
A representative approach is spatial-compression-and-forward for multi-antenna RUs, where each RU applies a linear spatial filter across its antenna array followed by per-coefficient uniform scalar quantization, with the filters designed locally and the transmit powers, receive beamformers, and bit allocation jointly optimized at the CU~\cite{liu2015optimized}. The KLT adopted in this paper can be viewed as a natural realization of such a decorrelating spatial filter.



Focusing on the bit allocation aspect, several variants of this paradigm have been investigated: mixed-resolution analog-to-digital converter (ADC) architectures that assign different resolutions across antenna chains under a sum resolution constraint guided by low-SNR rate approximations~\cite{park2017mixed}; distributed compressive sensing that reduces the RU-side dimension and recovers the signal jointly at the CU~\cite{rao2015distributed}; quantization-aware processing that concentrates resources on informative transformed coefficients~\cite{zhang2021quantization}; and two-timescale hybrid compress-and-forward separating long-term analog filtering from short-term digital filtering and bit allocation~\cite{liu2019twotimescale}. However, these works all rely on the Gaussian source assumption and do not capture the saturating MI behavior of finite-alphabet sources.


{\bf{Learning-Based Transform:}}
In parallel, learning-based approaches have enlarged the design space beyond analytically designed transforms. For instance, deep learning has been adopted as a low-complexity surrogate for joint beamforming and fronthaul quantization design~\cite{yu2021deep}, while progressive distributed compression has been learned from local channel state information (CSI) with a mean squared error (MSE) reconstruction objective~\cite{sohrabi2022learning}.


More directly in C-RAN fronthaul compression, meta-learning generates RU-side linear transformation matrices from local CSI, with a gated recurrent unit (GRU)-based refinement stage that uses low-dimensional global feedback to improve sum rate while retaining Gaussian signaling assumptions~\cite{qiao2024meta}.
A related line is neural source compression, which jointly learns the transform, quantizer, and entropy model under a rate-distortion objective~\cite{balle2018variational}.
This idea has been adapted to neural fronthaul compression for modulated waveforms, adopting an error vector magnitude (EVM)-based distortion measure rather than optimizing communication MI~\cite{bian2025towards}.
Closest to the present problem, \cite{chene2024distributed} jointly learns a linear filter, a uniform scalar quantizer with non-homogeneous bit allocation, and a decoder for uplink C-RAN under finite-alphabet signaling and a per-RU bit budget; however, the bit allocation emerges implicitly from black-box supervised training and a bit budget penalty rather than from an interpretable information-theoretic rule.

Across these three lines, the same gap remains: there is no explicit and interpretable fronthaul compression framework for finite-alphabet signaling that captures the saturating MI behavior of finite-alphabet sources.
This paper addresses this gap by directly maximizing the post-quantization MI at the CU under sum or per-RU fronthaul constraints.
Our contribution is therefore not a new transform, quantizer, or entropy model, but an interpretable bit allocation scheme designed specifically to capture this finite-alphabet saturation.


\subsection{Contributions}
\label{sec:contributions}


In this paper, we develop a fronthaul compression method built on the KLT followed by our proposed reverse mercury/waterfilling (RMWF) bit allocation, designed specifically for finite-alphabet inputs in the C-RAN uplink. Our main contributions are as follows.



{\bf{A General Bit Allocation Framework for Finite-Alphabet C-RAN
Uplinks:}}
We formulate the fronthaul bit allocation problem under finite-alphabet inputs as the maximization of the post-linear minimum mean squared error (LMMSE) finite-alphabet generalized mutual information (GMI)~\eqref{eq:gmi} at the CU.
Two operationally distinct fronthaul provisioning regimes, namely a sum constraint modeling pooled provisioning, and a per-RU constraint modeling dedicated per-link budgets, are handled within a single algorithmic structure. 
Prior C-RAN bit allocation works generally operate under the Gaussian rate-distortion framework~\cite{liu2015optimized, zhang2021quantization}, use low-SNR GMI approximations that miss the $\log_2 M$ saturation~\cite{park2017mixed}, or pursue finite-alphabet allocation through black-box learned surrogates~\cite{chene2024distributed}. 
In contrast, the proposed framework adopts the finite-alphabet, MI-aware view as the explicit system objective.

{\bf{RMWF Structure with a Four-Way Correspondence:}}
Applying the I-MMSE relation~\cite{GuoShamaiVerdu} on the bit allocation side of the chain, we derive a fixed-point update whose converged allocation decomposes into a per-coefficient vessel height, a shared water level, and a finite-alphabet mercury level, generalizing the RWF structure through an explicit saturation correction. 
In the Gaussian high-SNR regime, the allocation recovers the same RWF structure (classically derived for rate-distortion~\cite{CoverThomas}); together with WF and MWF on the power allocation side~\cite{LozanoTulinoVerdu}, this establishes a unified MI-maximizing view of the four-way correspondence (Table~\ref{tab:correspondence}), with the finite-alphabet bit allocation corner being the principal new derivation.

{\bf{Numerical Validation under Realistic C-RAN Deployments:}}
The framework is validated on a 3GPP 3D urban-microcell (UMi) C-RAN with hexagonal layouts and mixed modulation orders drawn from $\{4,16,64,256\}$-QAM. Three findings are consistent across all tested configurations: the proposed allocator substantially outperforms both Equal-Bit and Gaussian-input allocations under tight fronthaul budgets; it approaches the unquantized upper bound as the budget grows; and under antenna scaling at fixed fronthaul, it remains on a near-flat plateau while competing schemes degrade. This robustness hierarchy becomes consequential as antenna counts scale faster than fronthaul capacity. The gains carry over from system-level sum GMI to the per-user GMI distribution.

\section{System Model and Problem Formulation}
\label{sec:system_model}

\subsection{C-RAN Uplink with Finite-Alphabet Inputs Model}\label{subsec:cran_fa}

As illustrated in Fig.~\ref{fig:sys}, we consider a C-RAN uplink in which $K$ single-antenna 
users are served by a CU\footnote{We follow the O-RAN terminology \cite{oran2020wp}, where the RU performs RF and low-PHY processing while the CU performs centralized baseband processing. Throughout this paper, we adopt a fully centralized split, under which the high-PHY functions traditionally handled by the distributed unit (DU) are absorbed into the CU.} through $M$ RUs, 
each equipped with $N$ antennas. Each user $k \in 
\CMcal{K} = \{1, \dots, K\}$ is assigned a modulation 
order $M_k \in \{4, 16, 64, 256\}$ corresponding 
to 4-QAM, 16-QAM, 64-QAM, or 256-QAM, and transmits 
with power $P_k$. Let $x_k \in \CMcal{X}_{M_k} 
\subset \mathbb{C}$ denote the symbol transmitted by user $k$, drawn uniformly from the constellation 
$\CMcal{X}_{M_k}$ with 
$\mathbb{E}[|x_k|^2] = P_k$.

The signal observed by the RU $m$ is
\begin{align} \label{eq:received_signal}
  \mathbf{y}_m 
  = \sum_{k=1}^{K} \mathbf{h}_{m,k} x_k 
    + \mathbf{n}_m 
  = \mathbf{H}_m \mathbf{x} + \mathbf{n}_m 
  \in \mathbb{C}^{N},
\end{align}
where $\mathbf{h}_{m,k} \in \mathbb{C}^N$ is the 
channel vector from user $k$ to RU $m$, 
$\mathbf{n}_m \sim \mathcal{CN}(\mathbf{0}, \sigma^2 \mathbf{I}_N)$ is additive Gaussian noise, and the 
aggregated quantities are 
$\mathbf{x} = [x_1, \dots, x_K]^{\sf T} \in 
\mathbb{C}^K$ and $\mathbf{H}_m = [\mathbf{h}_{m,1}, 
\dots, \mathbf{h}_{m,K}] \in \mathbb{C}^{N \times K}$.

Each RU compresses its received signal $\mathbf{y}_m$ into a finite-rate bit stream before forwarding it to the CU.
Following standard practice~\cite{park2014fronthaul}, we adopt a transform coding architecture in which the received signal is first projected onto its statistical eigenbasis via the KLT, and the resulting decorrelated coefficients are then independently quantized. Specifically, let $\mathbf{R}_{\mathbf{y}_m} \triangleq \mathbb{E}[\mathbf{y}_m \mathbf{y}_m^{\sf H}] \in \mathbb{C}^{N \times N}$ denote the received signal covariance matrix at RU $m$, with eigenvalue decomposition
\begin{align}
  \mathbf{R}_{\mathbf{y}_m} = \mathbf{U}_m 
  \boldsymbol{\Lambda}_m \mathbf{U}_m^{\sf H},
\end{align}
where $\mathbf{U}_m \in \mathbb{C}^{N \times N}$ is unitary and $\boldsymbol{\Lambda}_m = \mathrm{diag}(\lambda_{m,1}, \dots, \lambda_{m,N})$ collects the eigenvalues in non-increasing order. 
The KLT coefficients are
\begin{align} \label{eq:klt}
  \boldsymbol{z}_m 
  \triangleq \mathbf{U}_m^{\sf H} \mathbf{y}_m 
  = [z_{m,1}, \dots, z_{m,N}]^{\sf T} 
  \in \mathbb{C}^{N},
\end{align}
which are mutually uncorrelated with $\mathbb{E}[|z_{m,n}|^2] = \lambda_{m,n}$.


Each coefficient $z_{m,n}$ is then quantized independently with $b_{m,n}$ bits per complex scalar. We adopt subtractive dithered quantization~\cite{schuchman1964dither}: the encoder at RU~$m$ adds a dither $\delta_{m,n}$ to $z_{m,n}$, applies a uniform scalar quantizer $Q_{m,n}(\cdot)$, and forwards the resulting quantization index to the CU over the fronthaul. 
The CU regenerates the corresponding reproduction level from the index and subtracts the dither realization, yielding
\begin{align}\label{eq:dithered_quant}
  \hat{z}_{m,n} = Q_{m,n}\!\left(z_{m,n} + \delta_{m,n}\right) - \delta_{m,n},
\end{align}
where $Q_{m,n}(\cdot)$ is a uniform quantizer with cell width $\Delta_{m,n}$ applied separately to the in-phase and quadrature components, and the dither $\delta_{m,n}$ has real and imaginary parts drawn independently from $\mathrm{Unif}\left([-\Delta_{m,n}/2,\,\Delta_{m,n}/2)\right)$, independent of $\mathbf{x}$ and $\{\mathbf{n}_m\}$.
The dither realizations are generated from a pseudorandom seed shared between RU~$m$ and the CU as common randomness, with no per-sample fronthaul overhead.

Defining the quantization error as $e_{m,n} \triangleq \hat{z}_{m,n} - z_{m,n}$, the construction~\eqref{eq:dithered_quant} yields the additive decomposition
\begin{align} \label{eq:additive_decomp}
  \hat{z}_{m,n} = z_{m,n} + e_{m,n}.
\end{align}
Under~\eqref{eq:dithered_quant}, the Schuchman condition~\cite{schuchman1964dither} guarantees that, in the granular (non-overload) regime, $e_{m,n}$ is statistically independent of $z_{m,n}$ and uniformly distributed over $[-\Delta_{m,n}/2,\,\Delta_{m,n}/2)$ on each real dimension, for any source distribution of $z_{m,n}$ and any bit allocation $b_{m,n}$.

Tying the cell width to the coefficient dynamic range via a loading factor $\kappa$, with the per-real-dimension clipping range $A_{m,n} = \kappa\sigma_{r,m,n}$ and $\sigma_{r,m,n}^2 = \lambda_{m,n}/2$ denoting the per-real-dimension variance of $z_{m,n}$, the cell width is
\begin{align} \label{eq:cell_width}
  \Delta_{m,n} = \frac{2 A_{m,n}}{2^{b_{m,n}/2}} = \frac{2\kappa\sigma_{r,m,n}}{2^{b_{m,n}/2}}.
\end{align}
Since the complex error $e_{m,n}$ aggregates the in-phase and quadrature components, each uniform over a cell of width $\Delta_{m,n}$, its second moment admits the following form
\begin{align} \label{eq:qmn_def}
  q_{m,n} 
  \triangleq \mathrm{Var}(e_{m,n})
  = 2\cdot\frac{\Delta_{m,n}^2}{12}
  = c \, \lambda_{m,n} \cdot 2^{-b_{m,n}},
\end{align}
where $c = \kappa^2/3$ is the dithered-uniform quantizer constant; for instance, a $3\sigma$ loading ($\kappa = 3$) yields $c = 3$. The exponential factor $2^{-b_{m,n}}$ reflects the classical distortion-rate scaling of uniform quantization~\cite{bennett1948spectra, gishpierce1968asymptotically, GershoGray}. 

Stacking the quantized coefficients $\hat{\boldsymbol{z}}_m = 
\boldsymbol{z}_m + \mathbf{e}_m$ and applying the inverse transform, the reconstructed signal at the CU is
\begin{align} \label{eq:reconstruction}
  \hat{\mathbf{y}}_m 
  = \mathbf{U}_m \hat{\boldsymbol{z}}_m 
  = \mathbf{y}_m + \mathbf{d}_m,
\end{align}
where $\mathbf{d}_m \triangleq \mathbf{U}_m \mathbf{e}_m$ is the equivalent quantization noise in the original signal domain. Its covariance is 
$\mathbb{E}[\mathbf{d}_m \mathbf{d}_m^{\sf H}] = 
\mathbf{U}_m \mathrm{diag}(q_{m,1}, \dots, q_{m,N}) \mathbf{U}_m^{\sf H}$, which is fully determined by the bit allocation $\{b_{m,n}\}_{n=1}^{N}$ at RU $m$.

\begin{figure*}[t]     
\centerline{\resizebox{1.8\columnwidth}{!}{\includegraphics{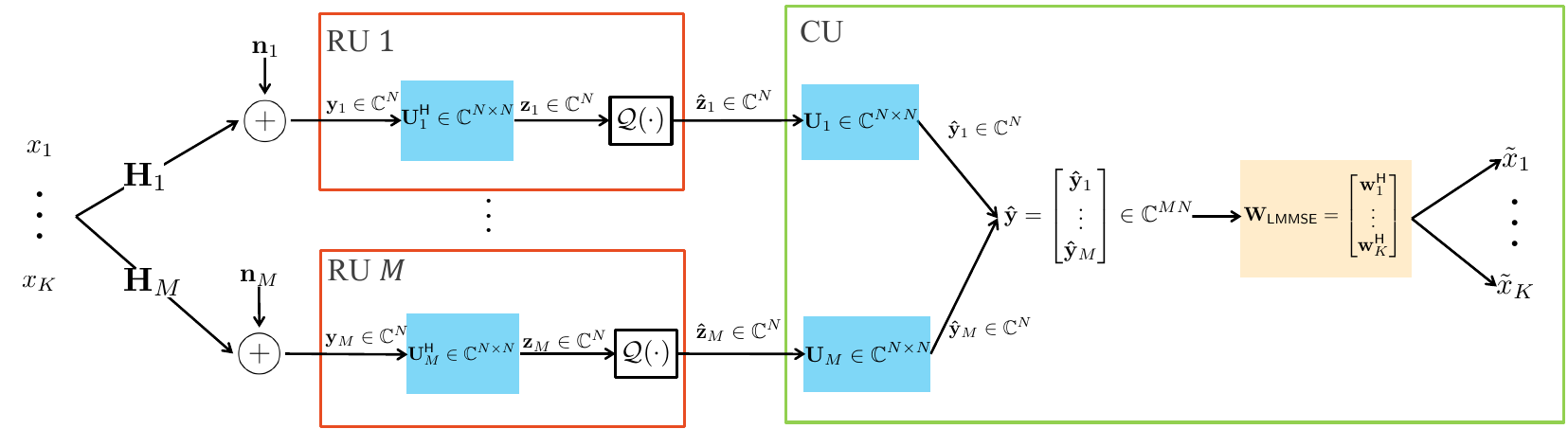}}}
     \caption{System model of the uplink C-RAN. $K$ users transmit through $\{\mathbf{H}_m\}$ to $M$ RUs with $N$ antennas each. Each RU $m$ projects $\mathbf{y}_m$ via the KLT $\mathbf{U}_m^{\sf H}$, quantizes the coefficients with bits $\{b_{m,n}\}$, and forwards to the CU. The CU applies the inverse transforms $\{\mathbf{U}_m\}$, stacks into $\hat{\mathbf{y}}\in\mathbb{C}^{MN}$, and recovers the $K$ symbols via the LMMSE receiver $\mathbf{W}_{\sf LMMSE}$.}
     \label{fig:sys}
\end{figure*}

\subsection{Joint Decoding Model and Performance Metric}\label{subsec:joint_decoding}
The CU collects the reconstructed signals $\{\hat{\mathbf{y}}_m\}_{m=1}^{M}$ from all RUs and stacks them into the aggregated observation
\begin{align} \label{eq:vmac}
  \hat{\mathbf{y}} 
  \triangleq \big[\, \hat{\mathbf{y}}_1^{\sf T},\, 
       \dots,\, \hat{\mathbf{y}}_M^{\sf T} \,\big]^{\sf T} 
  = \mathbf{H} \mathbf{x} + \mathbf{n} + \mathbf{d} 
  \in \mathbb{C}^{MN},
\end{align}
where $\mathbf{H} = [\mathbf{H}_1^{\sf T}, \dots, \mathbf{H}_M^{\sf T}]^{\sf T} \in \mathbb{C}^{MN \times K}$ is the aggregated channel matrix, $\mathbf{n} = [\mathbf{n}_1^{\sf T}, \dots, \mathbf{n}_M^{\sf T}]^{\sf T} \sim \mathcal{CN}(\mathbf{0}, \sigma^2 \mathbf{I}_{MN})$ is the stacked Gaussian noise, and $\mathbf{d} = [\mathbf{d}_1^{\sf T}, \dots, \mathbf{d}_M^{\sf T}]^{\sf T}$ is the stacked quantization noise with block-diagonal covariance
\begin{align} \label{eq:Q_cov}
  \mathbf{Q} 
  \triangleq \mathbb{E}[\mathbf{d} \mathbf{d}^{\sf H}] 
  = \mathrm{blkdiag}\left(
       \mathbf{U}_1 \mathbf{Q}_1 \mathbf{U}_1^{\sf H},\, 
       \dots,\, 
       \mathbf{U}_M \mathbf{Q}_M \mathbf{U}_M^{\sf H}
     \right),
\end{align}
where $\mathbf{Q}_m \triangleq \mathrm{diag}(q_{m,1}, \dots, q_{m,N})$. Under the dithered construction~\eqref{eq:dithered_quant}, $\mathbf{d}$ is zero-mean and statistically independent of $\mathbf{x}$, so the aggregate disturbance has covariance $\sigma^2\mathbf{I}_{MN}+\mathbf{Q}$.

The CU recovers user symbols by applying a linear receiver to $\hat{\mathbf{y}}$.
Specifically, we adopt the LMMSE receiver given by
\begin{align} \label{eq:lmmse}
  \mathbf{W}_{\sf LMMSE} 
  = \mathbf{P} \mathbf{H}^{\sf H} \mathbf{R}_{\mathbf{y}}^{-1},
\end{align}
where $\mathbf{P} \triangleq \mathrm{diag}(P_1, \dots, P_K)$ and $\mathbf{R}_{\mathbf{y}} \triangleq \mathbf{H} \mathbf{P} \mathbf{H}^{\sf H} + \sigma^2 \mathbf{I}_{MN} + \mathbf{Q}$. 
The soft estimate for user $k$ is then $\tilde{x}_k = [\mathbf{W}_{\sf LMMSE} \hat{\mathbf{y}}]_k$.

The post-equalization SINR for user $k$ admits the standard expression
\begin{align} \label{eq:sinr}
  \mathrm{SINR}_k 
  &= \frac{P_k \, |\mathbf{w}_k^{\sf H} \mathbf{h}_k|^2}
         {\mathbf{w}_k^{\sf H} \big(\sum_{j \neq k} 
            P_j \mathbf{h}_j \mathbf{h}_j^{\sf H} 
            + \sigma^2 \mathbf{I}_{MN} + \mathbf{Q}\big) 
          \mathbf{w}_k},
\end{align}
where $\mathbf{h}_k$ is the $k$-th column of $\mathbf{H}$ and $\mathbf{w}_k^{\sf H}$ is the $k$-th row of $\mathbf{W}_{\sf LMMSE}$. Defining the interference-plus-noise covariance $\mathbf{R}_k \triangleq \sum_{j \neq k} P_j \mathbf{h}_j \mathbf{h}_j^{\sf H} + \sigma^2 \mathbf{I}_{MN} + \mathbf{Q}$ so that $\mathbf{R}_{\mathbf{y}} = \mathbf{R}_k + P_k \mathbf{h}_k\mathbf{h}_k^{\sf H}$, the matrix inversion lemma applied to $\mathbf{R}_{\mathbf{y}}$ in~\eqref{eq:sinr} yields the equivalent compact form $P_k \mathbf{h}_k^{\sf H} \mathbf{R}_k^{-1} \mathbf{h}_k$.

The LMMSE output admits the decomposition $\tilde{x}_k = \rho_k x_k + \eta_k$, where $\rho_k \triangleq \mathbf{w}_k^{\sf H}\mathbf{h}_k$ and $\eta_k$ collects the residual multi-user interference, noise, and quantization distortion. Under the dithered construction, $\eta_k$ is independent of the desired symbol $x_k$, but remains non-Gaussian due to the finite-alphabet multi-user interference.
The decoder applies a Gaussian nearest-neighbor metric, which is mismatched to the true $\eta_k$ and yields the GMI~\cite{merhav1994information, ganti2000mismatched}:
\begin{align} \label{eq:gmi}
  I_k^{\sf GMI} 
  = \sup_{s \ge 0}\;
    \mathbb{E} \!\left[
      \log_2 
      \frac{\exp \!\big(-s\,|\tilde{x}_k - \rho_k x_k|^2\big)}
           {\frac{1}{M_k}\sum_{x' \in \CMcal{X}_{M_k}} 
             \exp \!\big(-s\,|\tilde{x}_k - \rho_k x'|^2\big)}
    \right],
\end{align}
where the expectation is over the joint law of $(x_k, \tilde{x}_k)$ and $M_k$ is the modulation order of user $k$. 
The GMI in~\eqref{eq:gmi} is a lower bound on $I(x_k; \tilde{x}_k)$.

Setting $\gamma_k \triangleq \mathrm{SINR}_k$, we define the constellation-constrained mutual information (CCMI) of a unit-power finite-alphabet input over a scalar AWGN channel~\cite{bruno2026mismatch}:
\begin{align} \label{eq:mi_finite_def}
  I_{M_k}(\gamma_k) 
  &= \log_2 M_k \nonumber \\
  &\quad - \mathbb{E}_{x, n}  \left[
    \log_2 \sum_{x' \in \CMcal{X}_{M_k}} 
    \exp  \left(
      -\frac{|x - x' + n|^2 - |n|^2}{1/\gamma_k}
    \right)
  \right],
\end{align}
with $x$ uniform on $\CMcal{X}_{M_k}$ and $n \sim \mathcal{CN}(0, 1/\gamma_k)$. 
We adopt the CCMI $I_{M_k}(\gamma_k)$ as the system-level design objective throughout this paper; Remark~\ref{rem:gaussian_residual} details its operational meaning.

\begin{remark}[Operational meaning of the GMI under the dithered residual] \label{rem:gaussian_residual}
\normalfont
By the mismatched-decoding theorem~\cite{merhav1994information, ganti2000mismatched},
the GMI in~\eqref{eq:gmi} is an unconditional lower bound on $I(x_k;\tilde{x}_k)$, requiring no structural assumption on $\eta_k$. 
As computing $I_k^{\sf GMI}$ within the bit-allocation loop is costly, we instead optimize the CCMI surrogate $I_{M_k}(\gamma_k)$ in~\eqref{eq:mi_finite_def}, which replaces $\eta_k$ by an independent Gaussian of matched variance. 
Although a worst-case-Gaussian guarantee fails for general finite-alphabet inputs~\cite{carmon2012disproof}, the known counterexamples require large skewness or excess kurtosis.
In our case, $\eta_k$ stays close to Gaussian because it is a sum of many finite-alphabet interferers with symmetric, light-tailed quantization errors induced by subtractive dithering and Gaussian noise. 
Therefore, the surrogate is expected to be conservative~\cite{bruno2026mismatch}. 
A direct evaluation indeed finds $I_k^{\sf GMI}$ and $I_{M_k}(\gamma_k)$ virtually indistinguishable across all
tested points (Section~\ref{subsec:ccmi}), making CCMI maximization empirically equivalent to GMI maximization here. This also mirrors commercial practice: LTE/5G NR receivers
compute Gaussian LLRs over post-equalization residuals under real-time constraints.
\end{remark}

\subsection{Problem Formulation}
\label{sec:problem}


We seek to find the bit allocation $\{b_{m,n}\}$, i.e., the number of bits quantizing the $n$-th KLT coefficient at RU $m$, for maximizing the sum GMI across all $K$ users at the CU. The allocation enters this objective via $\mathbf{Q}$ in~\eqref{eq:Q_cov}, hence $\mathrm{SINR}_k$~\eqref{eq:sinr} and $I_k^{\sf GMI}$~\eqref{eq:gmi}; we consider two formulations reflecting different fronthaul capacity regimes.

{\textbf{{Sum Fronthaul Bit Allocation:}}}
In some C-RAN deployments, multiple co-located RUs  share a common high-capacity fronthaul link to the CU. 
For example, this occurs when RUs within the same cell site are aggregated through a passive optical network or a shared Ethernet uplink~\cite{checko2015cloud, peng2015fronthaul}. Under such pooled provisioning, the operator may allocate the total bit budget $C_{\sf tot}$ freely across all RUs and all KLT coefficients, which is formulated as
\begin{align} \label{eq:problem_sum}
  \mathrm{(P1):} \quad 
  \max_{\{b_{m,n}\}} 
  \quad & \sum_{k=1}^{K} 
        I_{M_k} \big(\mathrm{SINR}_k(\{b_{m,n}\})\big) 
        \nonumber \\
  \mathrm{s.t.} 
  \quad & \sum_{m=1}^{M} \sum_{n=1}^{N} b_{m,n} 
        \leq C_{\sf tot}, \\
  & b_{m,n} \geq 0, 
        \quad \forall\, m, n. \nonumber
\end{align}
This setting offers flexibility in bit allocation, since quantization bits can be redistributed both across RUs and across KLT coefficients to maximize the sum GMI.


{\textbf{{Per-RU Fronthaul Bit Allocation:}}}
In typical deployments, each RU is connected to the 
CU via its own dedicated fronthaul link of capacity 
$C_m$, reflecting independent physical link budgets, service level agreement (SLA) isolation, or distinct fronthaul technologies across RUs.
In this case, the bits at each RU must satisfy a local budget:
\begin{align} \label{eq:problem_per_ru}
  \mathrm{(P2):} \quad 
  \max_{\{b_{m,n}\}} 
  \quad & \sum_{k=1}^{K} 
        I_{M_k} \big(\mathrm{SINR}_k(\{b_{m,n}\})\big) 
        \nonumber \\
  \mathrm{s.t.} 
  \quad & \sum_{n=1}^{N} b_{m,n} \leq C_m, 
        \quad m = 1, \dots, M, \\
  & b_{m,n} \geq 0, 
        \quad \forall\, m, n. \nonumber
\end{align}
Problem~(P2) is structurally harder than~(P1).
First, in contrast to~(P1), which involves only a single coupling constraint, (P2) imposes a separate fronthaul capacity constraint for each of the 
$M$ RU-CU links.
Second, bits can no longer be moved across RUs, which removes the main source of flexibility that~(P1) exploits. As a result,~(P2) reduces to $M$ per-RU subproblems that are nonetheless coupled through the joint SINR at the CU, and this coupling is what makes~(P2) difficult to solve.



Both (P1) and (P2) inherit three difficulties:
$I_M(\cdot)$ has no closed form and saturates at $\log_2 M$, breaking the unbounded-growth assumption behind classical RWF; $\mathrm{SINR}_k$ couples $\{b_{m,n}\}$ across all RUs and coefficients through the joint receiver; and the objective is non-convex in general. To address these, the next section studies an illustrative $N$-parallel channel example that builds the core mechanism of our approach, termed RMWF.

\section{Intuitive Example: RMWF} \label{sec:toy_fa}

In this section, we illustrate the key idea behind RMWF through a simple $N$-parallel channel example. 
We first develop the generalized RMWF solution for practical finite-alphabet inputs such as QAM, and then show that for the special case of Gaussian inputs, RMWF reduces to the classical RWF form in the high-SNR regime.

Consider $N$ parallel channels
\begin{align}
  y_i = x_i + n_i, \quad i = 1,\ldots, N
\end{align}
where $n_i \sim \mathcal{CN}(0, \sigma^2)$ is additive Gaussian noise. Each received signal $y_i$ is quantized 
with $b_i$ bits and forwarded to a remote decoder, subject to a total bit budget $\sum_{i=1}^N b_i = B$. 
A relevant question is: how should the bit budget $B$ be distributed across the $N$ channels?

Suppose that each $x_i \in \CMcal{X}_{M_i}$ is drawn from an $M_i$-QAM constellation with $\mathbb{E}[|x_i|^2] = P_i$ for $i = 1, \ldots, N$.
Assuming $b_i$ bits per complex dimension, the quantized observation $\hat{y}_i$ is modeled as $\hat{y}_i = y_i + e_i$ following \eqref{eq:additive_decomp}. 
Also following our quantization model, the quantization error $e_i$ is zero-mean. Its variance is given by
\begin{align} \label{eq:qi}
  q_i \triangleq \mathrm{Var}(e_i) 
  = c \, \sigma_{y_i}^2 \, 2^{-b_i},
\end{align}
where $\sigma_{y_i}^2 = {\rm{Var}}(y_i) = P_i + \sigma^2$. 
Assuming the decoder uses an MMSE estimator, the resulting post-quantization SNR for estimating $x_i$ is given by $\mathrm{SNR}_i = {P_i}/({\sigma^2 + q_i})$. With the post-quantization SNR, we now turn to the system-level optimization. 
Under a total quantization bit budget $B$, we consider the problem of maximizing the sum GMI between the inputs and the quantized observations:
\begin{align} \label{eq:opt_fa}
  \max_{\{b_i\}} \; \sum_{i=1}^N I_{M_i}(\mathrm{SNR}_i) 
  \quad 
  \text{s.t.} \quad \sum_{i=1}^N b_i \leq B, \; b_i \ge 0, \; \forall i.
\end{align} 
Forming the Lagrangian of~\eqref{eq:opt_fa} with multiplier $\mu > 0$, the KKT necessary condition reads $\partial I_{M_i}(\mathrm{SNR}_i)/\partial b_i \le \mu$ with equality when $b_i > 0$.
By the I-MMSE relation~\cite{GuoShamaiVerdu}, which is applicable under the Gaussian decoding metric of Remark~\ref{rem:gaussian_residual}, the marginal information gain in $\mathrm{SNR}$ admits $dI_{M_i}/d\mathrm{SNR}_i = \mathrm{mmse}_{M_i}(\mathrm{SNR}_i)/\ln 2$, where $\mathrm{mmse}_{M_i}(\cdot)$ is the MMSE function of the $M_i$-QAM constellation. Combined with the chain rule applied to $\mathrm{SNR}_i = P_i/(\sigma^2 + q_i)$, together with $\partial q_i / \partial b_i = -(\ln 2)\, q_i$~\eqref{eq:qi}, the KKT condition reduces to
\begin{align} \label{eq:kkt_fa_exact}
  \frac{P_i \, \mathrm{mmse}_{M_i}(\mathrm{SNR}_i)}{(\sigma^2 + q_i)^2} \cdot q_i 
  \begin{cases}
       = \mu & \text{if } b_i > 0 \\
       \leq \mu & \text{if } b_i = 0
  \end{cases}\,,
\end{align}
where $\mu$ is chosen to satisfy the budget constraint.
Unfortunately, \eqref{eq:kkt_fa_exact} admits no closed-form solution in $b_i$ because $\mathrm{mmse}_{M_i}(\cdot)$ has no analytic expression for finite-alphabet inputs. 
However, fixing the saturation factor $P_i\,\mathrm{mmse}_{M_i}(\mathrm{SNR}_i)/(\sigma^2+q_i)^2$ at the current value reduces~\eqref{eq:kkt_fa_exact} to a residual condition in $q_i$ that admits a closed-form solution. 
This naturally suggests a fixed-point iteration: our RMWF alternates between updating the saturation factor from the current allocation and updating the allocation in closed form, until convergence.

\begin{figure}[t]
    \centering
    \includegraphics[width=0.8\columnwidth]{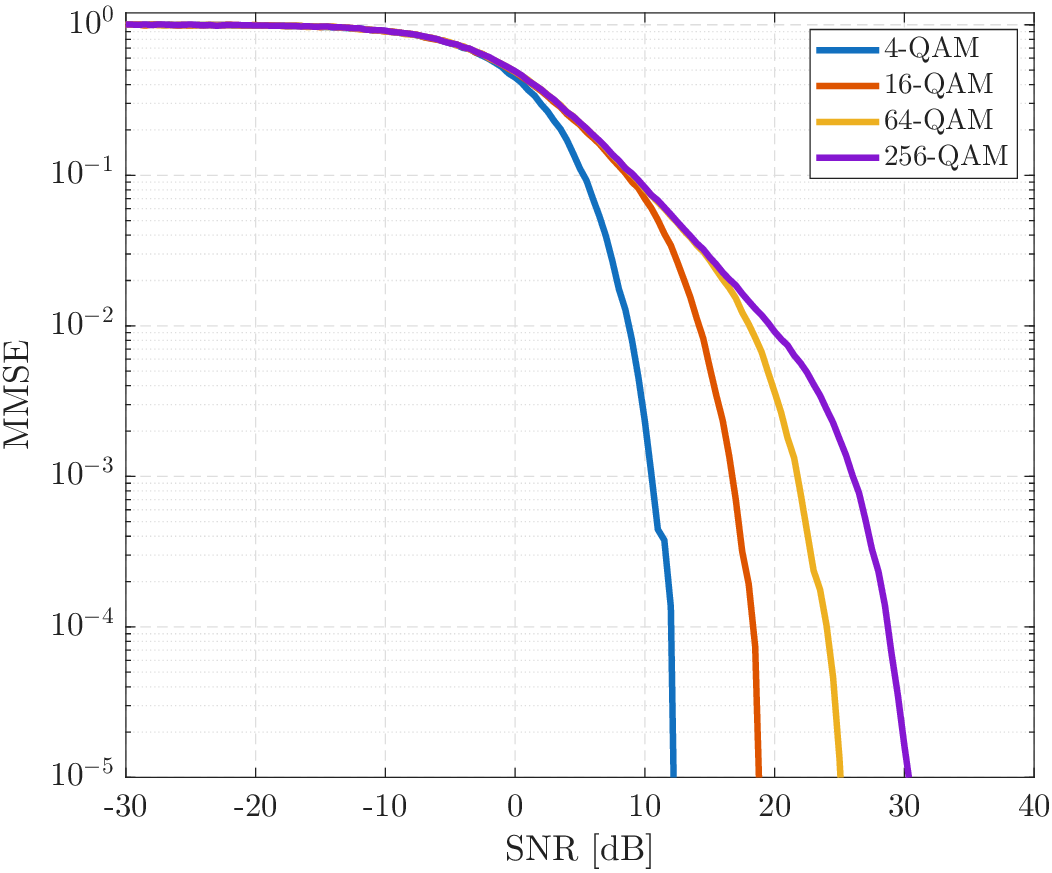}
    \caption{
    MMSE function $\mathrm{mmse}_{M_i}(\gamma)$ versus SNR $\gamma$ for representative $M_i \in \{4, 16, 64, 256\}$. 
    }
    \label{fig:mmse_lut}
\end{figure}

Specifically, let $\{b_i^{(t)}\}$ denote the quantization bit allocation at iteration $t$. We evaluate $\mathrm{SNR}_i^{(t)} = P_i/(\sigma^2 + q_i^{(t)})$ 
with $q_i^{(t)} = c (P_i+\sigma^2) \cdot 2^{-b_i^{(t)}}$ and compute the saturation factor\footnote{Since 
$\mathrm{mmse}_{M_i}(\cdot)$ admits no closed-form expression for finite-alphabet inputs, we precompute it over a dense SNR grid for each modulation order and store the values in a lookup table (Fig.~\ref{fig:mmse_lut}), as commonly done in MWF~\cite{LozanoTulinoVerdu}.}
\begin{align} \label{eq:mercury_update}
  m_i^{(t)} \triangleq 
  \frac{P_i \, \mathrm{mmse}_{M_i}(\mathrm{SNR}_i^{(t)})}{(\sigma^2 + q_i^{(t)})^2} .
\end{align}
Treating $m_i^{(t)}$ as fixed in~\eqref{eq:kkt_fa_exact} reduces the KKT condition to $q_i \cdot m_i^{(t)} = \mu^{(t+1)}$ for $b_i>0$, which yields
\begin{align} \label{eq:bit_update_fa}
b_i^{(t+1)} = 
\max\left\{
0 ,\, 
\log_2{\frac{\lambda_i \cdot m_i^{(t)}}{\mu^{(t+1)}}}
\right\},
\quad
\lambda_i \triangleq c(P_i + \sigma^2).
\end{align}
The multiplier $\mu^{(t+1)}$ is determined by the budget constraint. Let $\CMcal{A}^{(t+1)} = \{i \mid b_i^{(t+1)} > 0\}$ denote the set of active channels at iteration $t+1$. Conditioned on $\CMcal{A}^{(t+1)}$, the multiplier admits the explicit expression
\begin{align} \label{eq:mu_update_fa}
    |\CMcal{A}^{(t+1)}|\log_2 \mu^{(t+1)} 
    = \sum_{i \in \CMcal{A}^{(t+1)}} \log_2 \big( m_i^{(t)} \lambda_i \big) - B.
\end{align}
The complete process of the proposed RMWF is summarized in Algorithm~\ref{alg:rmwf_toy}.

Since the active set and the multiplier are interdependent, they must be determined jointly. 
We resolve this by initializing the active set to the full index set and alternating between \eqref{eq:mu_update_fa} and \eqref{eq:bit_update_fa} until the active set stabilizes. 
For a fixed iteration $t$, because removing inactive indices cannot decrease the water level required to satisfy the bit budget, indices discarded within this inner loop cannot be reactivated. 
Thus, the active set can only shrink, and the inner loop terminates after at most $N$ iterations.
Once $\CMcal{A}^{(t+1)}$ is fixed, each iteration of Algorithm~\ref{alg:rmwf_toy} therefore reduces to two elementary steps: \eqref{eq:mercury_update} refreshing the saturation factor $m_i^{(t)}$ from the current allocation, and the joint~\eqref{eq:bit_update_fa}--\eqref{eq:mu_update_fa} update producing the next allocation in closed form. 
This process repeats until $\{b_i^{(t)}\}$ converges.



\begin{algorithm}[t]
\caption{RMWF for $N$-parallel channel}
\label{alg:rmwf_toy}
\begin{algorithmic}[1]
\STATE \textbf{Input:} Powers $\{P_i\}$, modulations 
$\{M_i\}$, noise variance $\sigma^2$, quantization 
constant $c$, budget $B$, tolerance $\epsilon$
\STATE \textbf{Initialize:} $b_i^{(0)} \leftarrow 
B/N$ for all $i$; $\lambda_i \leftarrow c(P_i + \sigma^2)$; 
$t \leftarrow 0$
\REPEAT
  \STATE Compute $q_i^{(t)} = \lambda_i 
         \cdot 2^{-b_i^{(t)}}$ for all $i$
  \STATE Compute $\mathrm{SNR}_i^{(t)} = P_i / 
         (\sigma^2 + q_i^{(t)})$ for all $i$
  \STATE Update $m_i^{(t)}$ 
         via~\eqref{eq:mercury_update}
  \STATE \textbf{Joint $(\mu, \CMcal{A})$ update:} initialize $\CMcal{A}^{(t+1)} \leftarrow \{1, \dots, N\}$
  \REPEAT
    \STATE Compute $\mu^{(t+1)}$ 
           via~\eqref{eq:mu_update_fa} using $\CMcal{A}^{(t+1)}$
    \STATE Update $b_i^{(t+1)}$ via~\eqref{eq:bit_update_fa}
    \STATE $\CMcal{A}^{(t+1)} \leftarrow \{i \mid b_i^{(t+1)} > 0\}$
  \UNTIL{$\CMcal{A}^{(t+1)}$ unchanged}
  \STATE $t \leftarrow t + 1$
\UNTIL{$\|\mathbf{b}^{(t)} - \mathbf{b}^{(t-1)}\|_\infty 
       < \epsilon$}
\STATE \textbf{Output:} $\{b_i^{(t)}\}$
\end{algorithmic}
\end{algorithm}

We now interpret the structure of the proposed RMWF described in Algorithm~\ref{alg:rmwf_toy}.
At convergence, the iterate $\{b_i^{(t)}\}$ stabilizes at a fixed point $\{b_i^{\star}\}$ with associated saturation factor $m_i^{\star}$ and multiplier $\mu^{\star}$. 
Under $\gamma_i^\star \triangleq P_i / (\sigma^2 + q_i^\star) \gg 1$, 
the converged allocation admits the form\footnote{For notational clarity 
we set $c=1$ in the algebraic decomposition below; the actual algorithm 
retains $c$ as defined.
Reinstating $c$ corresponds to the rescaling 
$\mu^\star \to c\mu^\star$ and does not affect the structural form.}
\begin{align} 
  b_i^{\star} 
  &= \log_2{\left(\frac{P_i + \sigma^2}{\sigma^2+q_i^\star}\right)} + \log_2{\left(\frac{P_i}{\sigma^2+q_i^\star} \cdot \mathrm{mmse}_{M_i}(\gamma_i^\star)
  \right)} \nonumber \\
  &\qquad - \log_2 \mu^{\star} \\
  &\approx 
  \underbrace{\log_2{\gamma_i^\star}}_{\text{vessel height}}
  - \underbrace{\log_2 \mu^{\star}}_{\text{water level}}
  - \underbrace{\log_2{\frac{1}{\gamma_i^\star \cdot \mathrm{mmse}_{M_i}(\gamma_i^\star)}}}_{\text{mercury level}}, \label{eq:rmwf_sol}
\end{align}
where the approximation follows from 
\begin{align}
    \log_2{\left(\gamma_i^\star + \frac{\sigma^2}{\sigma^2 + q_i^\star}\right)} 
    \approx \log_2 \gamma_i^\star,
\end{align}
which holds tightly when $\gamma_i^\star$ is sufficiently large, since $q_i \ge 0$.

The above decomposition reveals three crucial ingredients of our allocation.
The water level $\log_2 \mu^\star$ is a universal threshold set by the total budget; bits are assigned only to the excess vessel height above it~\cite{CoverThomas}. 
This plays the same role as the water level in classical RWF, balancing the bit budget across all channels through a single shared threshold.
The vessel height $\log_2 \gamma_i^\star$ captures the post-quantization effective strength, hence depends on the final allocation through $q_i^\star$: a higher post-quantization SNR yields a taller vessel and more room for bits.
Notably, unlike MWF for power allocation~\cite{LozanoTulinoVerdu}, where the per-channel gain is fixed by the channel, the vessel height here is self-referential: it depends on the allocation $b_i^\star$ itself through $q_i^\star$, a structural feature unique to RMWF bit allocation.
The mercury level $\log_2(1/(\gamma_i^\star \mathrm{mmse}_{M_i}(\gamma_i^\star)))$ tracks the decoder's saturation state: when the constellation is well-resolved, $\mathrm{mmse}_{M_i} \to 0$ and the mercury grows, leaving little room for additional bits; far from saturation it imposes only a mild correction.

For a Gaussian source, $\mathrm{mmse}_{\rm G}(\gamma) = 1/(1+\gamma) \approx 1/\gamma$ at high SNR, so $\gamma\,\mathrm{mmse}_{\rm G}(\gamma) \to 1$ and the mercury vanishes. This recovers $b_i^\star = \log_2(\gamma_i^\star/\mu^\star)$, i.e., an RWF form. 
Under finite-alphabet inputs, the non-vanishing mercury diverts bits from saturated to unsaturated channels through the shared water level. Motivated by this analogy with MWF for power allocation, we term the proposed algorithm RMWF.

\begin{remark}[Connection to classical RWF]
\label{rem:classical_rwf}
\normalfont
Classical RWF~\cite{CoverThomas} solves the sum-distortion minimization for $N$ parallel Gaussian sources $\{x_i\}$ with variances $\{\sigma_{x_i}^2\}$ under a total bit budget, yielding $b_i^\star = \max\{0, \log_2(\sigma_{x_i}^2/\mu)\}$. Our formulation maximizes MI rather than minimizing distortion, yet in the Gaussian high-SNR limit it produces the same structural form $b_i^\star = \log_2(\gamma_i^\star/\mu^\star)$.
\end{remark}

\begin{remark}[A four-way correspondence between power and bit allocation]
\normalfont
Together with MWF for power allocation~\cite{LozanoTulinoVerdu}, RMWF completes a four-way correspondence summarized in Table~\ref{tab:correspondence}: the mercury correction emerges on both sides of the communication chain when the Gaussian input assumption is replaced by finite-alphabet constellations.
The Gaussian baseline hides this saturation, since the mercury level vanishes for Gaussian inputs, and the two problems reduce to the classical WF and classical RWF forms, respectively.
\end{remark}

\begin{remark}[Concavity in the high-rate regime]
\label{rem:convexity_toy}
\normalfont
Problem~\eqref{eq:opt_fa} is non-convex in $\{b_i\}$ in general, but concavity is recovered in the high-rate regime. To see this, a direct calculation gives
\begin{align} \label{eq:snr_second_deriv}
    \frac{d^2 \mathrm{SNR}_i}{db_i^2} 
    = (\ln 2)^2 P_i\, q_i \cdot \frac{q_i - \sigma^2}{(\sigma^2 + q_i)^3},
\end{align}
whose sign is governed by $q_i - \sigma^2$. Hence, when $q_i < \sigma^2\, \,\forall i$, $\mathrm{SNR}_i(b_i)$ is concave in each $b_i$. Combined with the concavity and monotonicity of $I_{M_i}(\gamma)$ in $\gamma$~\cite{GuoShamaiVerdu}, the objective in~\eqref{eq:opt_fa} becomes concave in $\{b_i\}$ \cite{boyd2004convex}. 
Therefore, the stationary point in the same region is a local maximum.
\end{remark}

\begin{figure}[t]     
\centerline{\resizebox{1.0\columnwidth}{!}{\includegraphics{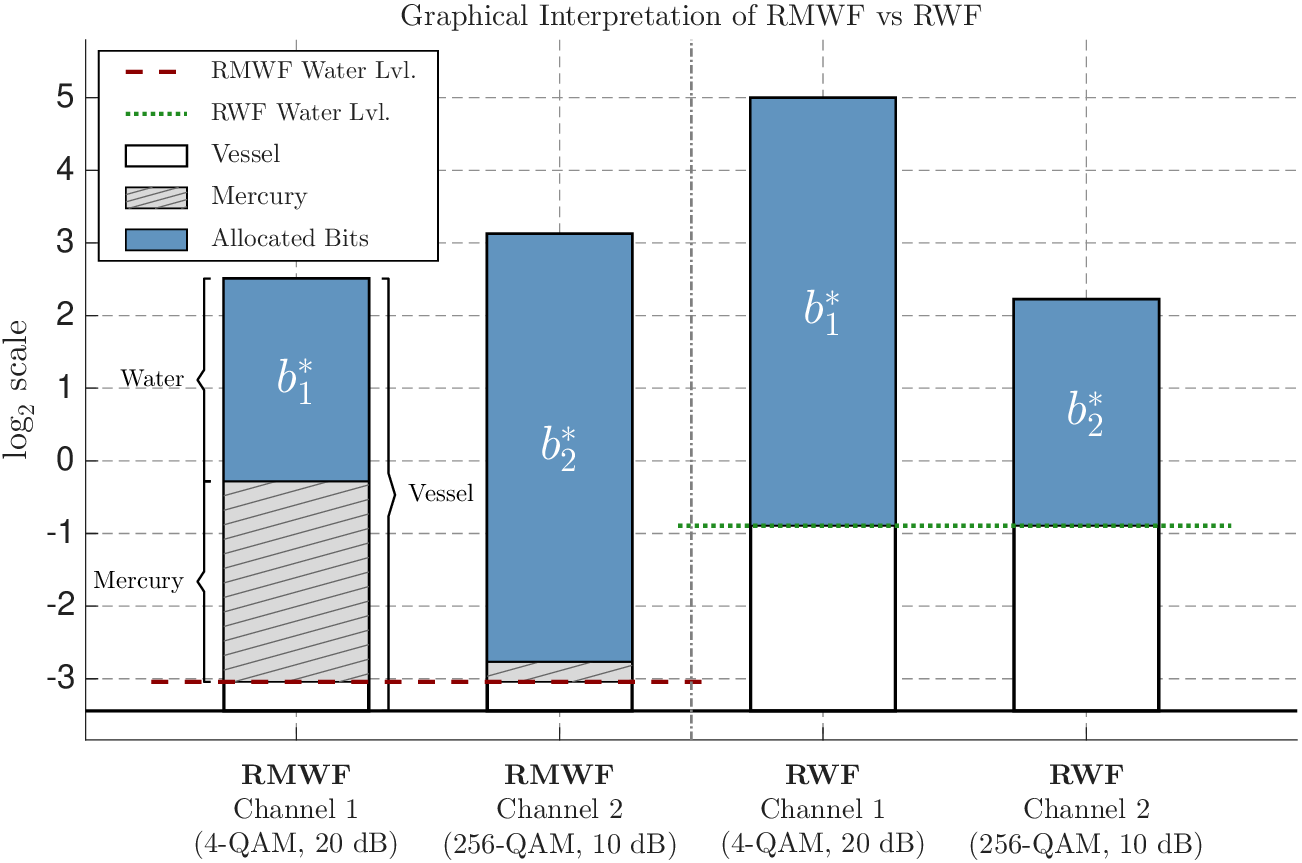}}}
    \caption{Graphical interpretation of RMWF and RWF, based on the high-SNR decomposition in~\eqref{eq:rmwf_sol}. 
    Channel 1 uses 4-QAM with $P_1/\sigma^2 = 20$~dB, and channel 2 uses 256-QAM with $P_2/\sigma^2 = 10$~dB.
    Total quantization bit budget $B=12$.
    The achieved average sum GMI is approximately 5.13 and 4 bits/symbol for RMWF and RWF, respectively, compared to an ideal 5.29 bits/symbol under infinite fronthaul capacity.
    }\label{fig:rmwf_rwf_interp}
\end{figure}

To clearly contrast the RMWF bit allocation incorporating the finite-alphabet inputs against the conventional Gaussian baseline, we illustrate the behaviors of RMWF and RWF in Fig.~\ref{fig:rmwf_rwf_interp}. 
Specifically, we consider a two-channel setup with a total quantization bit budget $B=12$; the first channel is configured with 4-QAM at a high transmit power-to-noise ratio of 20 dB, while the second channel employs 256-QAM at a relatively low 10 dB.
For each channel, the vessel height $\log_2\gamma_i^\star$ encodes the SNR, the mercury level $\log_2(1/(\gamma_i^\star\mathrm{mmse}_{M_i}(\gamma_i^\star)))$ sits above the bottom, and the allocated bits are the clearance between the shared water level $\log_2 \mu^\star$ and the top of the mercury.

The key contrast appears in channel~1. Under RMWF, the high SNR drives $\mathrm{mmse}_4(\gamma_1^\star)\to 0$, producing a thick mercury that blocks additional bit allocation. 
RWF, treating the input as Gaussian, has $\gamma\,\mathrm{mmse}_{\rm G}(\gamma)\to 1$ and therefore zero mercury; it continues pouring bits into the already-saturated channel. 
The vessel heights themselves also differ, because $\gamma_i^\star$ depends on the final allocation: RMWF assigns fewer bits to channel~1 and more bits to channel~2.
For channel 2, the constellation is far from saturation, so the RMWF mercury stays thin and RMWF directs more bits there, whereas RWF, blind to this, does the opposite.
In short, RMWF discriminates by joint channel strength and modulation-dependent saturation, while RWF discriminates by signal power alone and wastes bits on resolved constellations.

\begin{table*}[t]
\centering
\caption{Four-way correspondence between power and bit allocation for MI maximization.}
\label{tab:correspondence}
\renewcommand{\arraystretch}{1.5}
\begin{tabular*}{0.85\textwidth}{@{\extracolsep{\fill}} c|cc @{}}
\toprule
 & Gaussian inputs & Finite-alphabet inputs \\
\midrule
Power allocation & Waterfilling & Mercury/waterfilling~\cite{LozanoTulinoVerdu} \\
 & $p_i = \left(1/\mu - 1/g_i\right)^+$ 
 & $p_i = \mathrm{mmse}_{M_i}^{-1}\left(\min\{1, \mu/g_i\}\right) / g_i$ \\
Bit allocation & Reverse waterfilling~\cite{CoverThomas} & \textbf{Reverse mercury/waterfilling (this work)} \\
 & $b_i = \left(\log_2\left(\lambda_i/\mu\right)\right)^+$ 
 & $b_i = \left(\log_2{\gamma_i} - \log_2{\mu} - \log_2{1/(\gamma_i\cdot \mathrm{mmse}_{M_i}(\gamma_i))}\right)^+$ 
 \\
\bottomrule
\end{tabular*}
\end{table*}

\section{Iterative RMWF-based Fronthaul Compression Framework}
\label{sec:rmwf}

We now extend the RMWF framework developed in the toy example of Section~\ref{sec:toy_fa} to the multi-RU, multi-user C-RAN fronthaul compression problem.

\subsection{Sum Fronthaul Constraint}

{\bf{Lagrangian and KKT Condition:}}
Associating the multiplier $\mu > 0$ with the sum fronthaul budget in~\eqref{eq:problem_sum}, the corresponding Lagrangian is given by 
\begin{align} \label{eq:lagrangian_p1}
  \CMcal{L}\big(\{b_{m,n}\}, \mu\big) 
  = \sum_{k=1}^{K} 
    I_{M_k}\big(\mathrm{SINR}_k\big) 
  - \mu  \left(\sum_{m=1}^{M} \sum_{n=1}^{N} 
                b_{m,n} - C_{\sf tot}\right),
\end{align}
and the KKT necessary condition requires
\begin{align} \label{eq:kkt_p1}
  \sum_{k=1}^{K} 
  \frac{d I_{M_k}}{d \mathrm{SINR}_k} 
  \cdot \frac{\partial \mathrm{SINR}_k}
              {\partial b_{m,n}} 
  \begin{cases}
       = \mu & \text{if } b_{m,n} > 0 \\
       \leq \mu & \text{if } b_{m,n} = 0
  \end{cases}\,,
  \quad \forall (m, n).
\end{align}
The first factor in~\eqref{eq:kkt_p1} is obtained by the I-MMSE relation:
\begin{align} \label{eq:immse_user}
  \frac{d I_{M_k}}{d \mathrm{SINR}_k} 
  = \frac{1}{\ln 2} \, 
    \mathrm{mmse}_{M_k}\big(\mathrm{SINR}_k\big).
\end{align}
The second factor requires differentiating the 
post-LMMSE SINR through its dependence on the 
quantization noise covariance $\mathbf{Q}$. 
The following lemma provides the desired expression.

\begin{lemma}[Derivative of post-LMMSE SINR]
\label{lem:dsinr}
Let $\mathrm{SINR}_k = P_k \mathbf{h}_k^{\sf H} 
\mathbf{R}_k^{-1} \mathbf{h}_k$, where $\mathbf{R}_k \triangleq \sum_{j \neq k} P_j \mathbf{h}_j \mathbf{h}_j^{\sf H} + \sigma^2 \mathbf{I}_{MN} + \mathbf{Q}$. 
Then,
\begin{align} \label{eq:dsinr}
  \frac{\partial \mathrm{SINR}_k}{\partial b_{m,n}} 
  = (\ln 2) \, q_{m,n} \cdot P_k \, 
    \big|\mathbf{h}_k^{\sf H} \mathbf{R}_k^{-1} 
         \mathbf{\tilde u}_{m,n}\big|^2,
\end{align}
where $\mathbf{\tilde u}_{m,n} \triangleq \mathbf{v}_m \otimes \mathbf{u}_{m,n} \in \mathbb{C}^{MN}$, with $\mathbf{u}_{m,n} \in \mathbb{C}^N$ being the $n$-th eigenvector of the local covariance matrix $\mathbf{R}_{\mathbf{y}_m}$ at RU $m$, and $\mathbf{v}_m \in \mathbb{R}^M$ denoting the $m$-th standard basis vector.
\end{lemma}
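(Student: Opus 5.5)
The plan is a direct matrix-calculus argument. Only $\mathbf{Q}$ inside $\mathbf{R}_k$ depends on $b_{m,n}$, so we first compute $\partial \mathbf{R}_k/\partial b_{m,n} = \partial \mathbf{Q}/\partial b_{m,n}$. Then we differentiate the quadratic form $P_k \mathbf{h}_k^{\sf H}\mathbf{R}_k^{-1}\mathbf{h}_k$ using the derivative of a matrix inverse.

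\textbf{Step 1 (structure of $\mathbf{Q}$).} By~\eqref{eq:Q_cov}, the $m$-th diagonal block is $\mathbf{U}_m\mathbf{Q}_m\mathbf{U}_m^{\sf H} = \sum_{n'} q_{m,n'}\mathbf{u}_{m,n'}\mathbf{u}_{m,n'}^{\sf H}$. Embedding this block into the $MN\times MN$ block-diagonal matrix gives
\begin{align*}
  \mathbf{Q} = \sum_{m'=1}^{M}\sum_{n'=1}^{N} q_{m',n'}\,\tilde{\mathbf{u}}_{m',n'}\tilde{\mathbf{u}}_{m',n'}^{\sf H}, \qquad \tilde{\mathbf{u}}_{m',n'} = \mathbf{v}_{m'}\otimes\mathbf{u}_{m',n'},
\end{align*}
since $(\mathbf{v}_{m'}\otimes\mathbf{u})(\mathbf{v}_{m'}\otimes\mathbf{u})^{\sf H} = (\mathbf{v}_{m'}\mathbf{v}_{m'}^{\sf T})\otimes(\mathbf{u}\mathbf{u}^{\sf H})$ places $\mathbf{u}\mathbf{u}^{\sf H}$ in block $(m',m')$. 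Among all the $q_{m',n'}$, only $q_{m,n}$ depends on $b_{m,n}$: $\lambda_{m,n}$ is a statistic of $\mathbf{y}_m$ and does not depend on the bits. From~\eqref{eq:qmn_def}, $\partial q_{m,n}/\partial b_{m,n} = -(\ln 2)\,q_{m,n}$. Therefore
\[
\partial \mathbf{R}_k/\partial b_{m,n} = -(\ln 2)\,q_{m,n}\,\tilde{\mathbf{u}}_{m,n}\tilde{\mathbf{u}}_{m,n}^{\sf H}.
\]

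\textbf{Step 2 (inverse derivative).} We apply the standard identity $\partial \mathbf{R}_k^{-1} = -\mathbf{R}_k^{-1}(\partial\mathbf{R}_k)\mathbf{R}_k^{-1}$. This follows from differentiating $\mathbf{R}_k\mathbf{R}_k^{-1}=\mathbf{I}$; invertibility holds because $\mathbf{R}_k \succeq \sigma^2\mathbf{I}\succ 0$. The result is
\begin{align*}
  \frac{\partial \mathrm{SINR}_k}{\partial b_{m,n}} = (\ln 2)\,q_{m,n}\,P_k\,\mathbf{h}_k^{\sf H}\mathbf{R}_k^{-1}\tilde{\mathbf{u}}_{m,n}\tilde{\mathbf{u}}_{m,n}^{\sf H}\mathbf{R}_k^{-1}\mathbf{h}_k.
\end{align*}

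\textbf{Step 3 (modulus form).} Because $\mathbf{R}_k$ is Hermitian, $\tilde{\mathbf{u}}_{m,n}^{\sf H}\mathbf{R}_k^{-1}\mathbf{h}_k = (\mathbf{h}_k^{\sf H}\mathbf{R}_k^{-1}\tilde{\mathbf{u}}_{m,n})^{*}$. The product in Step~2 is therefore $|\mathbf{h}_k^{\sf H}\mathbf{R}_k^{-1}\tilde{\mathbf{u}}_{m,n}|^2$, which is exactly~\eqref{eq:dsinr}.

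\textbf{Main obstacle.} The only delicate point is Step~1: the rank-one Kronecker representation of the block-diagonal $\mathbf{Q}$, together with confirming that the bit allocation enters only through $q_{m,n}$. In particular, the KLT basis $\mathbf{U}_m$ and the eigenvalue $\lambda_{m,n}$ depend on $\mathbf{R}_{\mathbf{y}_m}$, which is the pre-quantization covariance. So they carry no dependence on $b_{m,n}$. The remaining steps are routine.
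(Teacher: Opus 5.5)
Your proposal is correct and takes essentially the same route as the paper's proof. Both apply the chain rule through $\mathbf{Q}$, then the inverse-derivative identity with $\partial \mathbf{R}_k/\partial b_{m,n} = -(\ln 2)\,q_{m,n}\,\tilde{\mathbf{u}}_{m,n}\tilde{\mathbf{u}}_{m,n}^{\sf H}$, and substitute; you additionally spell out the rank-one Kronecker expansion of the block-diagonal $\mathbf{Q}$ and the Hermitian step giving $|\mathbf{h}_k^{\sf H}\mathbf{R}_k^{-1}\tilde{\mathbf{u}}_{m,n}|^2$, both of which the paper leaves implicit.
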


\begin{proof}
The dependence of $\mathrm{SINR}_k$ on $b_{m,n}$ is 
mediated entirely through the quantization noise 
covariance $\mathbf{Q}$. By the chain rule,
\begin{align}
  \frac{\partial \mathrm{SINR}_k}{\partial b_{m,n}} 
  = P_k \mathbf{h}_k^{\sf H} 
    \frac{\partial \mathbf{R}_k^{-1}}
         {\partial b_{m,n}} \mathbf{h}_k.
\end{align}
The matrix derivative identity 
$\partial \mathbf{R}_k^{-1}/\partial b_{m,n} 
= -\mathbf{R}_k^{-1} (\partial \mathbf{R}_k/
\partial b_{m,n}) \mathbf{R}_k^{-1}$ together with 
$\partial \mathbf{R}_k/\partial b_{m,n} 
= \partial \mathbf{Q}/\partial b_{m,n} 
= -(\ln 2) q_{m,n} \mathbf{\tilde u}_{m,n} \mathbf{\tilde u}_{m,n}^{\sf H}$, where the last equality follows from $q_{m,n} = c \lambda_{m,n} \cdot 2^{-b_{m,n}}$ and the block structure of $\mathbf{Q}$ gives
\begin{align}
  \frac{\partial \mathbf{R}_k^{-1}}{\partial b_{m,n}} 
  = (\ln 2) q_{m,n} \mathbf{R}_k^{-1} 
    \mathbf{\tilde u}_{m,n} \mathbf{\tilde u}_{m,n}^{\sf H} 
    \mathbf{R}_k^{-1}.
\end{align}
Substituting this into the chain rule expression 
yields~\eqref{eq:dsinr}.
\end{proof}

Substituting~\eqref{eq:immse_user} 
and~\eqref{eq:dsinr} into~\eqref{eq:kkt_p1} yields the 
exact KKT condition
\begin{align} \label{eq:kkt_p1_exact}
  q_{m,n} \cdot S_{m,n}\big(\{b_{m,n}\}\big) 
  \begin{cases}
       = \mu & \text{if } b_{m,n} > 0 \\
       \leq \mu & \text{if } b_{m,n} = 0
  \end{cases}\,,
  \quad \forall (m, n),
\end{align}
where $S_{m,n}\big(\{b_{m,n}\}\big) $ is the user-aggregated saturation score defined as
\begin{align} \label{eq:score_def}
  S_{m,n}\big(\{b_{m,n}\}\big) 
  \triangleq \sum_{k=1}^{K} 
    P_k \, \mathrm{mmse}_{M_k}\big(\mathrm{SINR}_k\big) 
    \cdot \big|\mathbf{h}_k^{\sf H} 
                \mathbf{R}_k^{-1} 
                \mathbf{\tilde u}_{m,n}\big|^2.
\end{align}
The score $S_{m,n}$ generalizes the toy-example mercury weight $m_i$~\eqref{eq:mercury_update}
to the multi-user, multi-RU C-RAN setting.
The $(m,n)$-th coefficient contributes to each user $k$ through the product of two factors: the LMMSE projection $|\mathbf{h}_k^{\sf H}\mathbf{R}_k^{-1}\mathbf{\tilde u}_{m,n}|^2$, measuring how strongly user $k$'s detection leverages it, and $P_k\,\mathrm{mmse}_{M_k}(\mathrm{SINR}_k)$, measuring how much a marginal SINR gain is still worth under user $k$'s saturation state.
Their product is the coefficient's marginal contribution to user $k$'s GMI, so $S_{m,n}$ scores the total contribution of the $(m,n)$-th coefficient to the sum GMI.

{\bf{Iterative RMWF Algorithm:}}
The KKT condition~\eqref{eq:kkt_p1_exact} couples 
the bit allocations $\{b_{m,n}\}$ through the score 
$S_{m,n}$, which itself depends on $\{b_{m,n}\}$ via 
the post-LMMSE SINRs $\{\mathrm{SINR}_k\}$ (through 
$\mathbf{R}_k$). 
Specifically, given the iterate $\{b_{m,n}^{(t)}\}$ at step $t$, we compute the post-quantization SINRs $\{\mathrm{SINR}_k^{(t)}\}$ via~\eqref{eq:sinr} with $\mathbf{Q}^{(t)}$ assembled from $q_{m,n}^{(t)} = c \lambda_{m,n} \cdot 2^{-b_{m,n}^{(t)}}$, and refresh the scores
\begin{align} \label{eq:score_update}
  S_{m,n}^{(t)} 
  \triangleq \sum_{k=1}^{K} 
    P_k \, \mathrm{mmse}_{M_k}\big(\mathrm{SINR}_k^{(t)}\big) 
    \cdot \big|\mathbf{h}_k^{\sf H} 
              \big(\mathbf{R}_k^{(t)}\big)^{-1} 
              \mathbf{\tilde u}_{m,n}\big|^2.
\end{align}
By treating $\{S_{m,n}^{(t)}\}$ as fixed in~\eqref{eq:kkt_p1_exact}, the KKT condition reduces to
\begin{align} \label{eq:bit_update_p1}
  b_{m,n}^{(t+1)} 
  = \max\left\{
  0,\,
  \log_2 \frac{c\lambda_{m,n} \cdot S_{m,n}^{(t)}}
                {\mu^{(t+1)}}\right\},
\end{align}
where we have substituted $q_{m,n} = c \lambda_{m,n} \cdot 2^{-b_{m,n}}$ and 
written $\lambda_{m,n}$ for the KLT eigenvalue. 


The multiplier $\mu^{(t+1)}$ is determined by 
imposing the sum-rate budget 
$\sum_{m,n} b_{m,n}^{(t+1)} = C_{\sf tot}$ 
on~\eqref{eq:bit_update_p1}. Let 
$\CMcal{A}^{(t+1)} = \{(m, n) \mid b_{m,n}^{(t+1)} > 0\}$ 
denote the set of active KLT coefficients at 
iteration $t+1$. Conditioned on 
$\CMcal{A}^{(t+1)}$, the multiplier admits the 
explicit expression
\begin{align} \label{eq:mu_closed}
  |\CMcal{A}^{(t+1)}|\log_2 \mu^{(t+1)} 
  = \sum_{(m,n) \in \CMcal{A}^{(t+1)}} 
    \log_2 \big(c\lambda_{m,n} \cdot S_{m,n}^{(t)}\big) 
  - C_{\sf tot}.
\end{align}
Since $\CMcal{A}^{(t+1)}$ itself depends on $\mu^{(t+1)}$ through~\eqref{eq:bit_update_p1}, the active set and the multiplier must be determined jointly. 
We do this by initializing $\CMcal{A}^{(t+1)}$ to the full index set $\{(m, n)\}_{m,n}$ and alternating  between~\eqref{eq:mu_closed} and~\eqref{eq:bit_update_p1} until $\CMcal{A}^{(t+1)}$ stabilizes. 
As in the toy example, this inner loop terminates finitely and no bisection on $\mu$ is required. 
Each iteration thus consists of two elementary steps: \eqref{eq:score_update} refreshing the scores, and~\eqref{eq:bit_update_p1}--\eqref{eq:mu_closed} updating the allocation in closed form.
This process repeats until convergence. The full procedure is summarized in Algorithm~\ref{alg:rmwf_p1}.

The per-iteration complexity is dominated by the $K$ matrix inverses 
$(\mathbf{R}_k^{(t)})^{-1} \in \mathbb{C}^{MN \times MN}$ in the SINR step, 
costing $\mathcal{O}(K(MN)^3)$. Empirically, the iteration converges in 
$T = 5$--$8$ steps for typical C-RAN configurations 
(see Section~\ref{subsec:convergence}), making the overall complexity 
$\mathcal{O}(T K (MN)^3)$. 


\begin{algorithm}[t]
\caption{Iterative RMWF for (P1)}
\label{alg:rmwf_p1}
\begin{algorithmic}[1]
\STATE \textbf{Input:} Channels $\{\mathbf{H}_m\}$, 
powers $\{P_k\}$, modulations $\{M_k\}$, 
eigenvalues $\{\lambda_{m,n}\}$, eigenvectors 
$\{\mathbf{\tilde u}_{m,n}\}$, budget $C_{\sf tot}$, 
tolerance $\epsilon$
\STATE \textbf{Initialize:} $b_{m,n}^{(0)} 
\leftarrow C_{\sf tot}/(MN)$ for all $(m, n)$; 
$t \leftarrow 0$
\REPEAT
  \STATE Form $\mathbf{Q}^{(t)} = 
         \mathrm{blkdiag}(\mathbf{U}_m \mathbf{Q}_m^{(t)} 
         \mathbf{U}_m^{\sf H})$ with 
         $q_{m,n}^{(t)} = c \lambda_{m,n} \cdot 
         2^{-b_{m,n}^{(t)}}$
  \STATE Compute $\mathrm{SINR}_k^{(t)} 
         = P_k \mathbf{h}_k^{\sf H} 
         (\mathbf{R}_k^{(t)})^{-1} \mathbf{h}_k$ 
         for all $k$
  \STATE Update scores $S_{m,n}^{(t)}$ 
         via~\eqref{eq:score_update}
  \STATE \textbf{Joint $(\mu, \CMcal{A})$ update:} 
         initialize $\CMcal{A}^{(t+1)} \leftarrow 
         \{(m, n) \mid 1 \le m \le M, 1 \le n \le N\}$
  \REPEAT
    \STATE Compute $\mu^{(t+1)}$ 
           via~\eqref{eq:mu_closed} using 
           $\CMcal{A}^{(t+1)}$
    \STATE Update $b_{m,n}^{(t+1)}$ 
           via~\eqref{eq:bit_update_p1}
    \STATE $\CMcal{A}^{(t+1)} \leftarrow 
           \{(m, n) \mid b_{m,n}^{(t+1)} > 0\}$
  \UNTIL{$\CMcal{A}^{(t+1)}$ unchanged}
  \STATE $t \leftarrow t + 1$
\UNTIL{$\|\mathbf{b}^{(t)} - 
       \mathbf{b}^{(t-1)}\|_{\infty} < \epsilon$}
\STATE \textbf{Output:} $\{b_{m,n}^{(t)}\}$
\end{algorithmic}
\end{algorithm}


\subsection{Per-RU Fronthaul Constraint}

Problem~(P2) inherits the KKT structure and fixed-point iteration of
Section~\ref{sec:rmwf}-A verbatim; only the global multiplier $\mu$
splits into $M$ RU-local multipliers $\{\mu_m\}$. Associating $\mu_m > 0$ with the $m$-th constraint in~\eqref{eq:problem_per_ru} yields the
localized KKT condition
\begin{align} \label{eq:kkt_p2_exact}
  q_{m,n} \cdot S_{m,n}\big(\{b_{m,n}\}\big)
  \begin{cases}
       = \mu_m & \text{if } b_{m,n} > 0 \\
       \leq \mu_m & \text{if } b_{m,n} = 0
  \end{cases},
  \quad \forall (m, n),
\end{align}
with $S_{m,n}$ defined as in~\eqref{eq:score_def}. The bits are then
updated by
\begin{align} \label{eq:bit_update_p2}
  b_{m,n}^{(t+1)} = \max \left\{0,\,
    \log_2 \frac{c\lambda_{m,n} \cdot S_{m,n}^{(t)}}
                {\mu_m^{(t+1)}}\right\},
\end{align}
where each $\mu_m^{(t+1)}$ is determined by imposing the local budget
$\sum_n b_{m,n}^{(t+1)} = C_m$, yielding the closed form
\begin{align} \label{eq:mu_closed_p2}
  |\CMcal{A}_m^{(t+1)}| \log_2 \mu_m^{(t+1)}
  = \sum_{n \in \CMcal{A}_m^{(t+1)}}
    \log_2 \big(c\lambda_{m,n} \cdot S_{m,n}^{(t)}\big) - C_m,
\end{align}
with $\CMcal{A}_m^{(t+1)}=\{n\mid b_{m,n}^{(t+1)}>0\}$. As in~(P1),
each $(\mu_m, \CMcal{A}_m)$ pair is determined jointly by alternation;
the inner loop now runs independently per RU. Structurally,~(P2) decomposes into $M$ local RMWF problems coupled indirectly through the scores: a bit-starved user inflates the scores of all coefficients serving it across all RUs, prompting each RU to reallocate its local budget toward that user. The full procedure appears in Algorithm~\ref{alg:rmwf_p2}, with overall complexity $\mathcal{O}(TK(MN)^3)$ identical to~(P1).



\begin{remark}[Concavity of (P1) and (P2)]
\label{rem:concavity_cran}
\normalfont
Problems (P1) and (P2) are non-convex in $\{b_{m,n}\}$ in general, 
but concavity is recovered in the high-rate regime $q_{m,n} < \sigma^2\,\, \forall m,n$. 
To establish $\nabla^2 \mathrm{SINR}_k \preceq \mathbf{0}$, 
$\mathbf{R}_k^{-1} \preceq (\sigma^2 \mathbf{I}_{MN} + \mathbf{Q})^{-1}$ 
together with the KLT-diagonal structure of $\mathbf{Q}$ reduces the 
second directional derivative of $\mathrm{SINR}_k$ to the 
per-coefficient comparison 
$2q_{m,n}^2/(\sigma^2+q_{m,n}) \le q_{m,n}$, equivalent to 
$q_{m,n} \le \sigma^2$. 
The subsequent composition with the concavity and monotonicity of $I_{M_k}$ then proceeds identically to Remark~\ref{rem:convexity_toy}, so the stationary point in the same region is a local maximum.
\end{remark}

\begin{algorithm}[t]
\caption{Iterative RMWF for (P2)}
\label{alg:rmwf_p2}
\begin{algorithmic}[1]
\STATE \textbf{Input:} Channels $\{\mathbf{H}_m\}$, 
powers $\{P_k\}$, modulations $\{M_k\}$, 
eigenvalues $\{\lambda_{m,n}\}$, eigenvectors 
$\{\mathbf{\tilde u}_{m,n}\}$, per-RU budgets $\{C_m\}$, 
tolerance $\epsilon$
\STATE \textbf{Initialize:} $b_{m,n}^{(0)} 
\leftarrow C_m/N$ for all $(m, n)$; $t \leftarrow 0$
\REPEAT
  \STATE Form $\mathbf{Q}^{(t)} = 
         \mathrm{blkdiag}(\mathbf{U}_m \mathbf{Q}_m^{(t)} 
         \mathbf{U}_m^{\sf H})$ with 
         $q_{m,n}^{(t)} = c \lambda_{m,n} \cdot 
         2^{-b_{m,n}^{(t)}}$
  \STATE Compute $\mathrm{SINR}_k^{(t)}$ 
         and update scores $S_{m,n}^{(t)}$ 
         via~\eqref{eq:score_update}
  \FOR{$m = 1, \dots, M$}
    \STATE \textbf{Joint $(\mu_m, \CMcal{A}_m)$ update:} 
           initialize $\CMcal{A}_m^{(t+1)} \leftarrow 
           \{1, \dots, N\}$
    \REPEAT
      \STATE Compute $\mu_m^{(t+1)}$ 
             via~\eqref{eq:mu_closed_p2} using 
             $\CMcal{A}_m^{(t+1)}$
      \STATE Update $b_{m,n}^{(t+1)}$ 
             via~\eqref{eq:bit_update_p2} for all $n$
      \STATE $\CMcal{A}_m^{(t+1)} \leftarrow 
             \{n \mid b_{m,n}^{(t+1)} > 0\}$
    \UNTIL{$\CMcal{A}_m^{(t+1)}$ unchanged}
  \ENDFOR
  \STATE $t \leftarrow t + 1$
\UNTIL{$\|\mathbf{b}^{(t)} - 
       \mathbf{b}^{(t-1)}\|_{\infty} < \epsilon$}
\STATE \textbf{Output:} $\{b_{m,n}^{(t)}\}$
\end{algorithmic}
\end{algorithm}


\section{Numerical Results}\label{sec:numerical}


We consider an $M = 7$ cell hexagonal layout with an inter-site distance of $d = 300$~m, where each cell hosts a single RU equipped with $N$ antennas at a height of $30$~m.
In each channel realization, $K = 14$ users are dropped independently: each user is associated with a cell chosen uniformly at random and placed uniformly within that hexagonal cell at a height of $1.5$~m and at least $10$~m from its RU.
Wireless channels are generated using the 3GPP 3D UMi standard channel model at a carrier frequency of $3.5$~GHz with a subcarrier spacing of $30$~kHz, including both pathloss and log-normal shadow fading. 
Each user is subject to a sum-power constraint of $10$~dBm over a physical resource block ($12$ subcarriers $\times$ $14$ OFDM symbols $=168$ resource elements (REs)), which is met with equal power per RE and thus gives $P_k = -12.25$~dBm. 
The per-RE noise power is $-129.2$~dBm, based on a $-174$~dBm/Hz density and a $30$~kHz subcarrier spacing.
Since the large-scale parameters are common across the block, we evaluate one RE per channel realization.
The modulation order $M_k$ for each user is drawn independently and uniformly at random from $\{4, 16, 64, 256\}$ in every channel realization. 
The antenna count $N$ and the fronthaul budget ($C_{\sf tot}$ or $C_m$) vary per figure and are stated in the corresponding caption.



We summarize the methods compared in our experiments as follows.
\begin{itemize}
    \item \textbf{Upper Bound}: The unquantized performance, obtained by setting the quantization error variance $q_{m,n} = 0$ for all $(m,n)$, equivalently $b_{m,n} \to \infty$. 
    This corresponds to a fronthaul with infinite capacity and provides a method-agnostic ceiling on the achievable sum GMI under the same channel and modulation realizations.
    

    \item \textbf{RMWF}:
    The proposed RMWF allocator with explicit finite-alphabet awareness. The MMSE lookup table (Fig.~\ref{fig:mmse_lut}) is constructed for $M_k \in \{4, 16, 64, 256\}$, with piecewise-linear interpolation between adjacent grid points at runtime. 

    
    \item \textbf{RWF}:
    The classical RWF approach derived under the sum-distortion minimization objective with the assumption of Gaussian sources~\cite{CoverThomas}. Unlike the proposed RMWF, this baseline allocates bits by minimizing the quantization error treating the inputs as Gaussian.

    \item \textbf{Equal-Bit}:
    A modulation-agnostic baseline that retains only the $N/2$ dominant KLT coefficients per RU, assigning $2 \lfloor C_{\sf{tot}} / (MN) \rfloor$ bits to each retained coefficient and allocating any remaining bits in 2 bit increments to the strongest ones to exactly satisfy the budget.
\end{itemize}


A scalar mid-rise uniform quantizer is applied at sample level, separately to the in-phase and quadrature components of each KLT coefficient, with subtractive dithering wrapped around it as in Section~\ref{subsec:cran_fa}: for $b_{m,n}/2$ bits per real dimension, the clipping interval $[-A, A]$ with $A = 3\sigma_{\sf r}$ and $\sigma_{\sf r}^2 = \lambda_{m,n}/2$ is partitioned into $2^{b_{m,n}/2}$ uniform bins and the sample is mapped to the corresponding bin-center reconstruction level, with the zero-bit case mapping to zero output. 
The $3\sigma_{\sf r}$ loading (i.e., $\kappa = 3$) yields the granular-regime constant $c = \kappa^2/3 = 3$ in $q_{m,n} = c\,\lambda_{m,n}\,2^{-b_{m,n}}$.
Since bits are assigned per real dimension, the real-valued RMWF/RWF allocations are rounded to non-negative even integers via the largest-remainder method, with any residual budget mismatch absorbed by $\pm 2$-bit adjustments to the largest entries to preserve the exact budget; this rounding granularity also informs the stopping tolerance used in Section~\ref{subsec:convergence}.

\subsection{GMI versus CCMI}\label{subsec:ccmi}


Fig.~\ref{fig:ccmi} validates the use of the CCMI~\eqref{eq:mi_finite_def} 
as the RMWF design objective by directly comparing it with the actual 
achievable rate, the GMI~\eqref{eq:gmi}, for both (P1) and (P2).

After RMWF returns the bit allocation $\{b_{m,n}^\star\}$, the CU recomputes the quantization noise covariance $\mathbf{Q}^\star$ at this fixed allocation and redesigns the LMMSE receiver $\mathbf{W}_{\sf LMMSE}$ in~\eqref{eq:lmmse} accordingly. 
This two-stage use of $\mathbf{Q}$ is by design: the allocation stage requires the differentiable closed form $q_{m,n} = c\,\lambda_{m,n}\,2^{-b_{m,n}}$ to derive the fixed-point update, whereas the receiver stage, with $\{b_{m,n}^\star\}$ already fixed, can adopt the noise statistics, including the clipping contribution beyond the granular term. 
The CU evaluates these statistics from the signal statistics already available for detection, together with the loading factor $\kappa$, the bits it has assigned, and the locally generated dither, requiring no side information.

The GMI and CCMI curves are visually indistinguishable for both constraints across the entire range of $C_{\sf tot}$, so maximizing the CCMI~\eqref{eq:mi_finite_def} is empirically equivalent to maximizing the actual GMI~\eqref{eq:gmi} in our operating regime.
Guided by this validation, the remainder of this section adopts the 
following convention: RMWF performs bit allocation by maximizing 
the CCMI~\eqref{eq:mi_finite_def}, while all reported rates are 
evaluated as the actual GMI~\eqref{eq:gmi} obtained from Monte 
Carlo simulation of the true residual statistics.

\begin{figure}[t]     
\centerline{\resizebox{0.8\columnwidth}{!}{\includegraphics{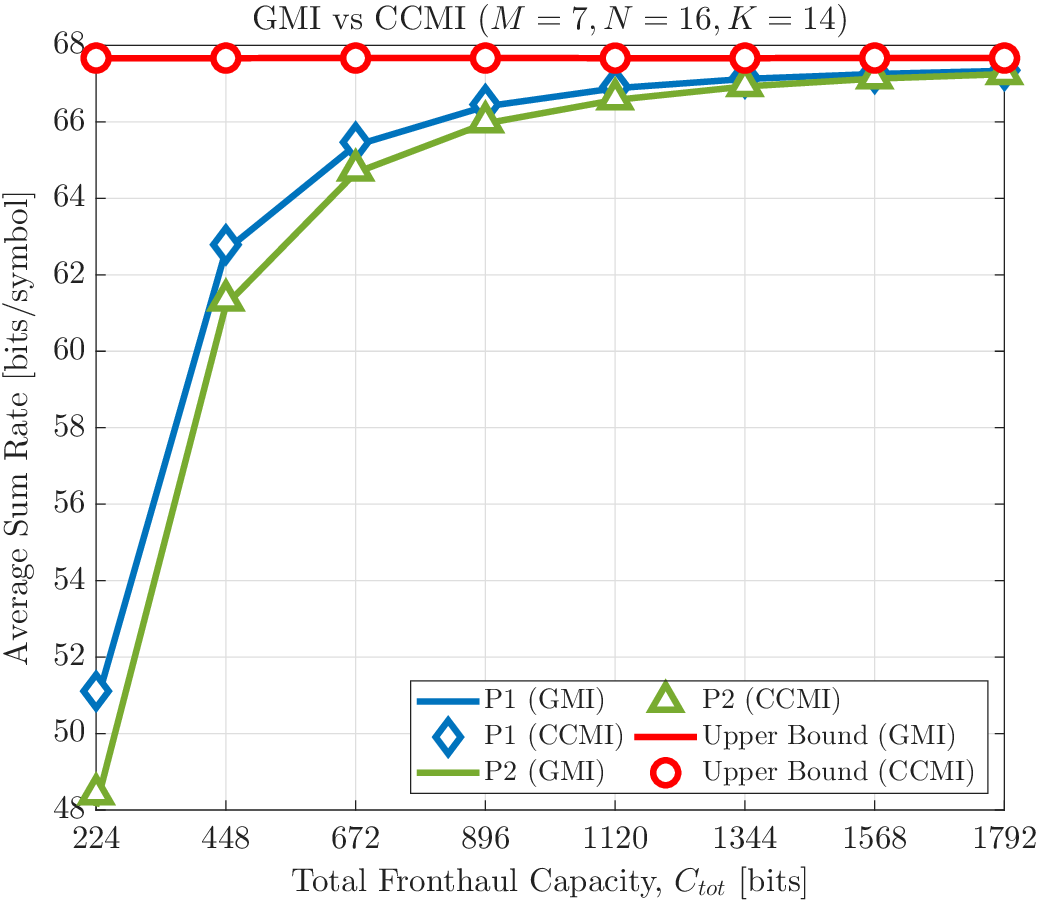}}}
\caption{CCMI-based bit allocation under RMWF: the optimization objective (CCMI, markers) closely tracks the achievable rate (GMI, solid lines), validating the design choice.
$N = 16$ antennas per RU, with $C_m = C_{\sf tot}/M$, $\forall m$, for (P2).}
\label{fig:ccmi}
\end{figure}

\subsection{Average Sum GMI versus Bit Budget}

\begin{figure}[t]
    \centering
    \subfloat[\label{fig:8vs16_P1}]{
        \resizebox{0.8\columnwidth}{!}{\includegraphics{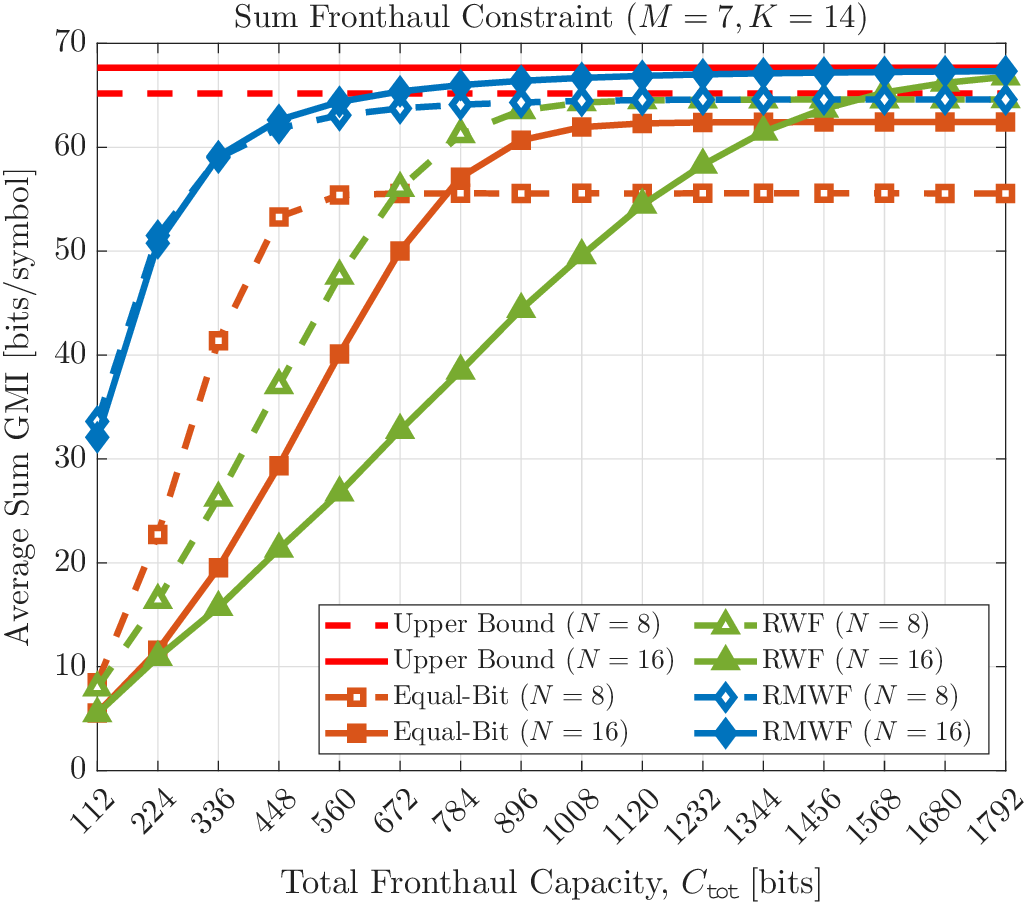}}
    }
    \\ 
    \subfloat[\label{fig:8vs16_P2}]{
        \resizebox{0.8\columnwidth}{!}{\includegraphics{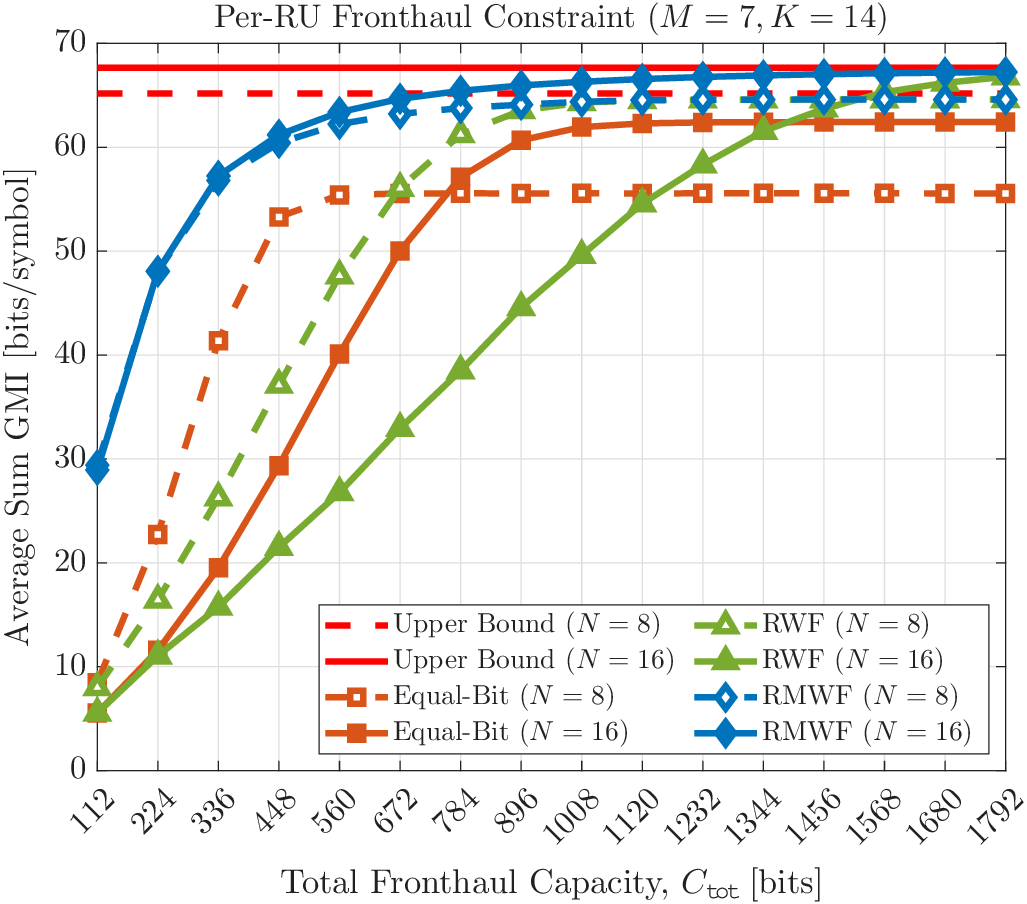}}
    }
    
\caption{Average sum GMI versus total fronthaul capacity $C_{\sf tot}$ for
$N \in \{8, 16\}$ antennas per RU. (a) Sum fronthaul constraint. (b) Per-RU
fronthaul constraint with $C_m = C_{\sf tot}/M$, $\forall m$. }
    \label{fig:8vs16}
\end{figure}

Fig.~\ref{fig:8vs16} plots the average sum GMI as a function of $C_{\sf tot}$ for $N \in \{8, 16\}$ antennas per RU, with Fig.~\ref{fig:8vs16_P1} corresponding to the sum fronthaul constraint and Fig.~\ref{fig:8vs16_P2} to the per-RU constraint. 
Across both constraints and the entire range of $C_{\sf tot}$, the proposed RMWF outperforms both RWF and Equal-Bit. 
At the tightest budget $C_{\sf tot} = 112$ with $N = 16$, RMWF attains a maximum gain of approximately $484.84\%$ ($479.74\%$) over RWF (Equal-Bit) under the sum constraint and $422.84\%$ ($422.84\%$) under the per-RU constraint.

To trace the source of these gains, we first examine how the two baselines turn the bit budget into GMI.
Equal-Bit and RWF share the same design principle: both rank the KLT
coefficients by variance and remain blind to the constellation, differing only in that RWF activates coefficients adaptively through a water level, whereas Equal-Bit rigidly splits the budget over the top half of the eigenvalues. 
This yields a budget-dependent crossover in Fig.~\ref{fig:8vs16}: Equal-Bit outperforms RWF at tight budgets by concentrating bits on the strongest half, but is overtaken beyond $C_{\sf tot} = 560$ and $C_{\sf tot} = 1344$ for $N = 8$ and $N = 16$ respectively, once its excluded coefficients become the bottleneck while RWF keeps activating new ones.
Fixing the active set a priori is thus limiting; yet even RWF's adaptive selection still trails RMWF across the entire range.

RMWF overcomes both limitations simultaneously.
Because it allocates bits to directly maximize the finite-alphabet GMI, its selection is governed jointly by coefficient strength and the modulation-dependent saturation.
As a result, it neither fixes the active set in advance like Equal-Bit, nor, like RWF, wastes bits on coefficients whose marginal contribution to the GMI has already saturated, but instead flexibly selects, in every budget regime, the coefficients that contribute most to the GMI. 
This GMI-driven, finite-alphabet-aware selection is precisely what sustains
RMWF's advantage under tight fronthaul budgets.


\subsection{Average Sum GMI versus the Number of Antennas}\label{subsec:gmi_vs_N}

\begin{figure}[t]
    \centering
    \subfloat[\label{fig:sweep_p1}]{
        \resizebox{0.8\columnwidth}{!}{\includegraphics{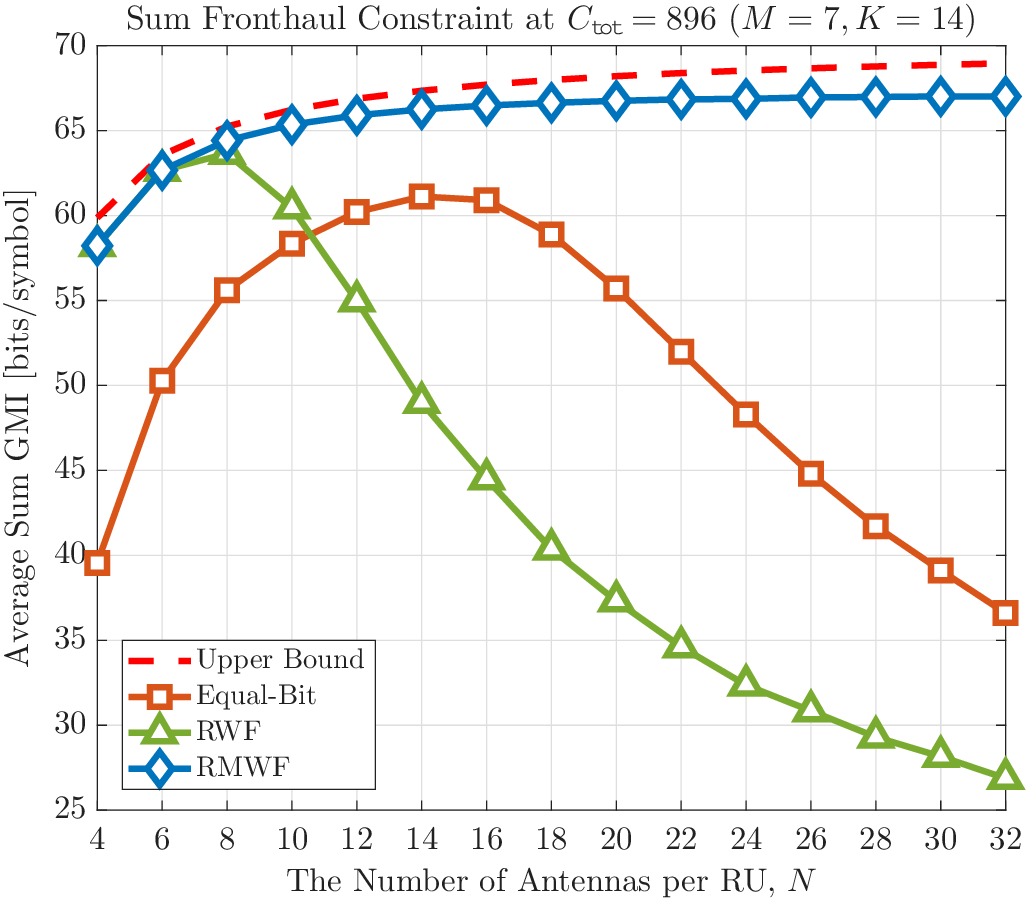}}
    }
    \\ 
    \subfloat[\label{fig:sweep_p2}]{
        \resizebox{0.8\columnwidth}{!}{\includegraphics{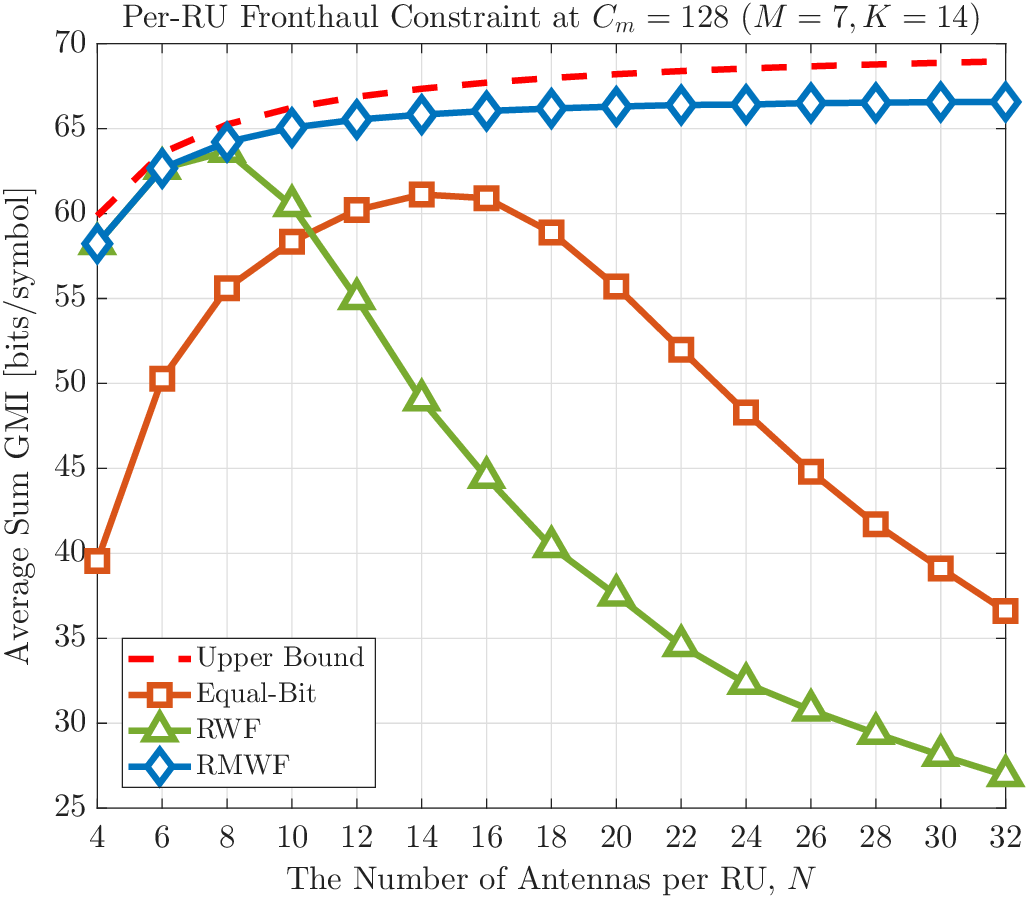}}
    }
\caption{Average sum GMI versus the number of antennas per RU $N$ at a fixed total fronthaul capacity $C_{\sf tot} = 896$. 
(a) Sum fronthaul constraint. (b) Per-RU fronthaul constraint with $C_m = C_{\sf tot}/M = 128$.}
    \label{fig:sweep}
\end{figure}

Fig.~\ref{fig:sweep} plots the average sum GMI as a function of the number of antennas per RU $N$ at a fixed total fronthaul capacity $C_{\sf tot} = 896$, with Fig.~\ref{fig:sweep_p1} and Fig.~\ref{fig:sweep_p2} corresponding to the sum and per-RU fronthaul constraints, respectively. 
The upper bound, which is independent of any allocation strategy, increases monotonically with $N$, reflecting the additional spatial degrees of freedom available at the CU.



The budget $C_{\sf tot}=896$ is a deliberate reference point. 
As Fig.~\ref{fig:8vs16} shows, it provides exactly enough bits for both RMWF and RWF to reach the upper bound at $N=8$.
However, increasing $N$ spreads this fixed budget across more coefficients ($MN=7N$), making the allocation critical. Fig.~\ref{fig:sweep} fixes $C_{\sf tot}=896$ to explicitly capture this transition. 
It reveals a clear shift from a budget-rich regime ($N\le8$) where both schemes track the bound, to a budget-limited regime ($N>8$) where the schemes diverge.

The proposed RMWF settles into a near-flat plateau without any subsequent degradation once the budget-limited regime is entered, after gaining approximately $7.68$ and $7.31$~bits/symbol from $N = 4$ to $N = 12$ under the sum and per-RU constraints, respectively. 
The activation rule in \eqref{eq:bit_update_p1}--\eqref{eq:mu_closed} (analogously \eqref{eq:bit_update_p2}--\eqref{eq:mu_closed_p2} for the per-RU case) explains why: only coefficients with $c\lambda_{m,n} \cdot S_{m,n} > \mu$ enter the active set $\CMcal{A}$, so the extra coefficients introduced by larger $N$, carrying small eigenvalues, simply fall below the water level and are left inactive.
RMWF thus keeps concentrating the budget on the same dominant coefficients regardless of $N$, and the plateau is the imprint of this correct selection rather than an allocator-side limitation. 
The residual gap to the upper bound, which integrates all eigenmodes including the weak ones that RMWF excludes, widens only gradually with $N$.

In contrast, RWF degrades almost immediately past $N = 8$: ranking coefficients by variance alone under a Gaussian model, it cannot concentrate the scarce budget where it is needed, and its sum GMI falls monotonically thereafter.
Equal-Bit, on the other hand, keeps climbing until $N=14$ before degrading.
As Fig.~\ref{fig:8vs16} shows, Equal-Bit at $N=16$ is not fully saturated at $C_{\sf tot}=896$. 
Consequently, its peak in Fig.~\ref{fig:sweep} emerges at $N=14$, where $896$ bits suffice to hit a per-$N$ ceiling that stays below the upper bound due to top-half selection.
Beyond $N=14$, $b=2C_{\sf tot}/(MN)$ becomes so small that even the dominant coefficients are quantized too coarsely.
Lacking any mechanism to isolate the informative ones, Equal-Bit lets this distortion outweigh the spatial gain, losing about $24.52$~bits/symbol from its $N=14$ peak by $N=32$.

These behaviors carry a clear operational message: only RMWF turns additional antennas into a sustained benefit under a fixed fronthaul budget.
RMWF holds its GMI on a near-flat plateau as $N$ grows, whereas RWF and Equal-Bit convert antenna over-provisioning into a GMI loss. 
In short, RMWF can scale the antenna count while the competing schemes cannot.
This scalability is an increasingly consequential property as antenna counts grow faster than fronthaul capacity in modern C-RAN deployments.

\subsection{CDF of Per-User GMI}

\begin{figure}[t]     
\centerline{\resizebox{0.8\columnwidth}{!}{\includegraphics{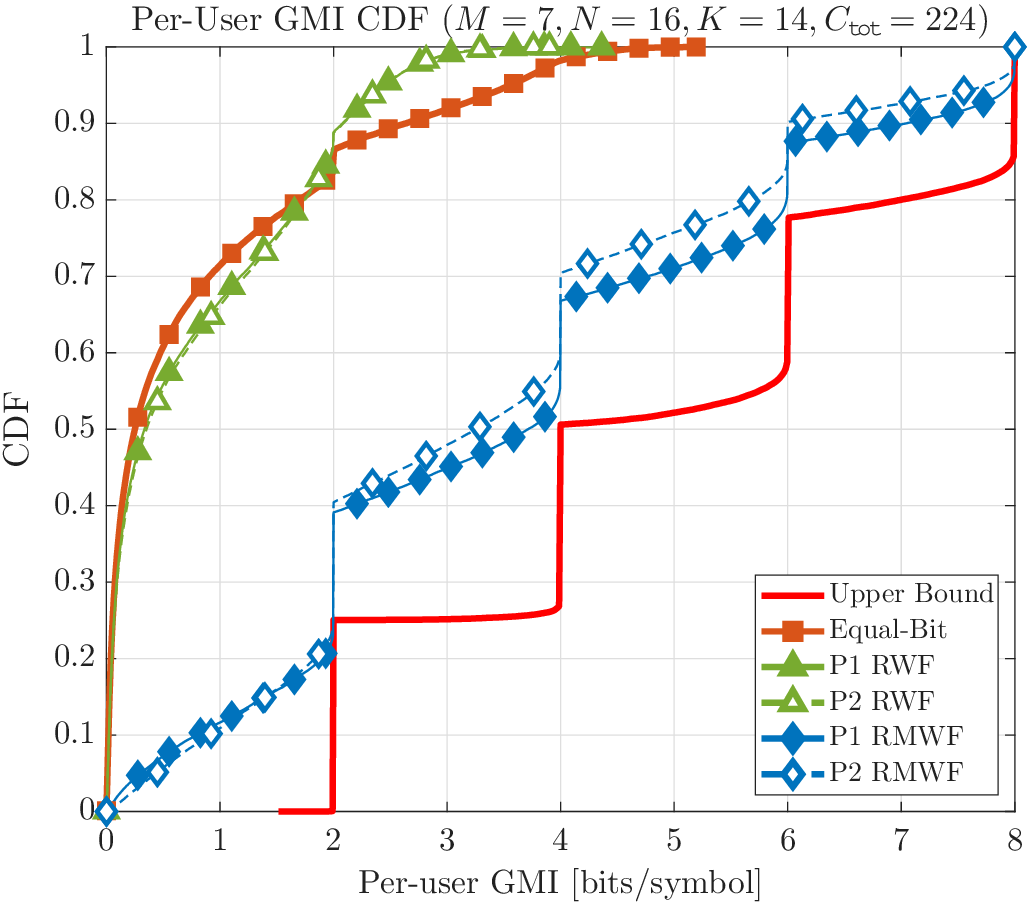}}}
\caption{CDF of per-user GMI for the sum (P1) and per-RU (P2) constraints;
$N = 16$ and $C_{\sf tot} = 224$.}
\label{fig:cdf}
\end{figure}

Fig.~\ref{fig:cdf} depicts the cumulative distribution function (CDF) of per-user GMI under the sum (P1) and per-RU (P2) fronthaul constraints. 
As shown in Fig.~\ref{fig:cdf}, under both Equal-Bit and RWF, approximately $50\%$ of the per-user GMI values fall below $0.35$~bits/symbol, and the vast majority remain below $4$~bits/symbol. 
This indicates that adaptive but Gaussian-based allocation provides little per-user benefit over Equal-Bit at this tight budget.

In contrast, the proposed RMWF achieves substantially higher per-user GMI. 
Under the sum constraint (P1), $50\%$ of the users obtain at least $3.71$~bits/symbol, and the $80$th percentile reaches $5.99$~bits/symbol. 
A similar trend is observed under the per-RU constraint (P2), where $50\%$ of the users obtain at least $3.25$~bits/symbol and the $80$th percentile reaches $5.68$~bits/symbol.

These results indicate that, regardless of whether a sum or per-RU fronthaul constraint is imposed, directly maximizing the finite-alphabet GMI is far more effective than minimizing reconstruction distortion under a Gaussian assumption.
This advantage manifests not only in the average sum GMI but also in the per-user GMI distribution, confirming that GMI-driven, finite-alphabet-aware bit allocation is critical from both system-level and user-level perspectives.
Furthermore, the RMWF advantage is somewhat larger under the sum constraint, since RMWF redistributes bits across RUs to lift more users.
The Gaussian-based RWF fails to exploit this flexibility, as its per-user distribution stays essentially unchanged between the two constraints.

\subsection{Convergence}
\label{subsec:convergence}

\begin{figure}[t]     
\centerline{\resizebox{0.8\columnwidth}{!}{\includegraphics{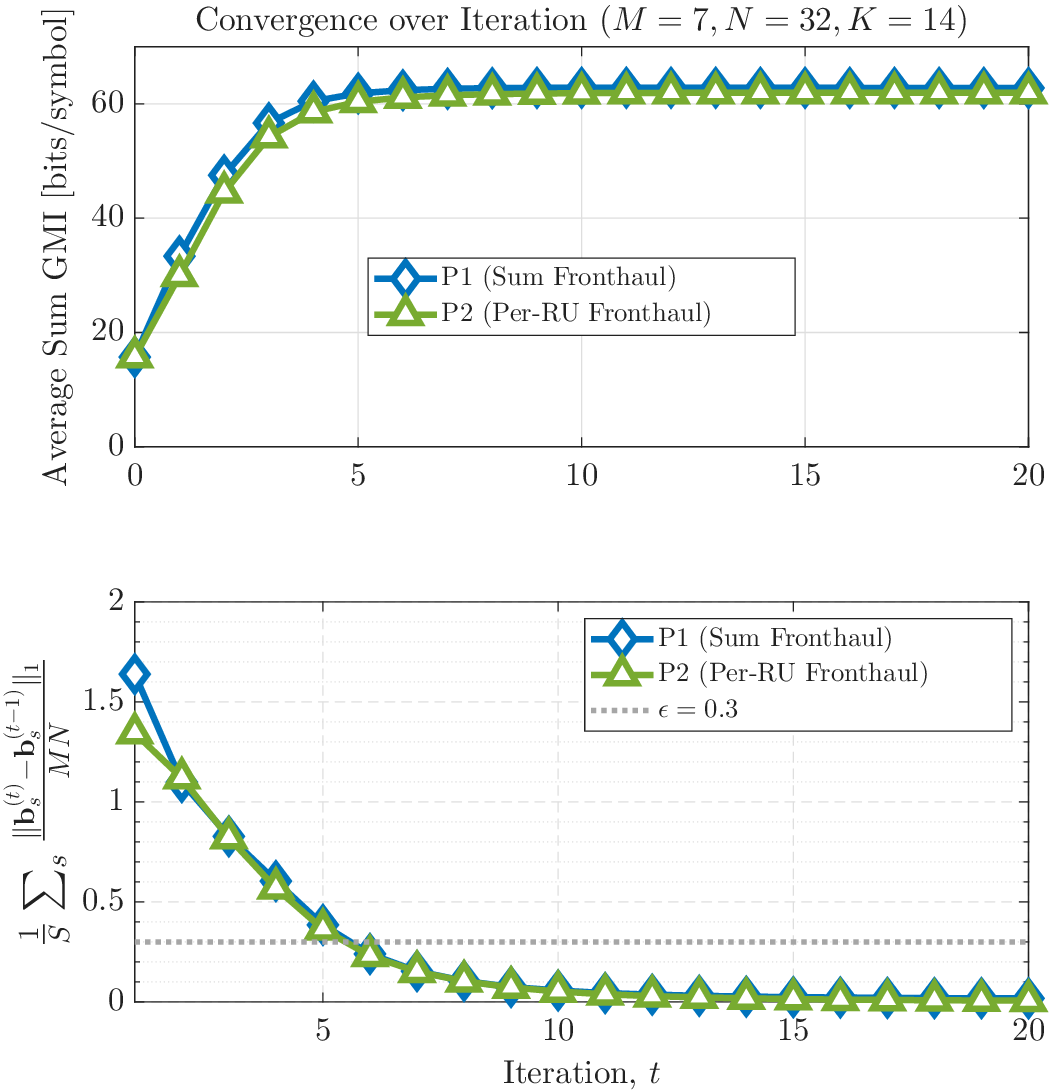}}}
\caption{Average sum GMI versus iteration index $t$ under $N = 32$ and $C_{\sf tot} = 448$, averaged over $S = 5000$ channel snapshots. 
Both Algorithm~\ref{alg:rmwf_p1} (P1) and Algorithm~\ref{alg:rmwf_p2} (P2) rise monotonically from the uniform initialization and reach a plateau within $T = 5$--$8$ iterations. 
The lower panel shows the per-element average bit-iterate change $\Delta^{(t)}$ defined in \eqref{eq:bit_change}; the stopping tolerance $\epsilon = 0.3$ is indicated, and the small saturated-regime residual visible after $t \approx 10$ does not affect the rounded bit allocation.}
\label{fig:GMIconv}
\end{figure}

A formal convergence proof is not pursued here; we instead report empirical evidence.
Fig.~\ref{fig:GMIconv} reports the convergence behavior over iterations. Its upper panel shows that the average sum GMI rises monotonically from the uniform initialization and plateaus within $T=5$--$8$ iterations for both (P1) and (P2).
The lower panel shows the per-element average bit change
\begin{align}
\Delta^{(t)} \triangleq \frac{1}{S}\sum_{s=1}^{S}
\frac{\|\mathbf{b}_s^{(t)} - \mathbf{b}_s^{(t-1)}\|_1}{MN},
\label{eq:bit_change}
\end{align}
averaged over $S=5000$ snapshots; 
the residual decays rapidly.
In implementation, we use the per-element-average stopping rule
$\Delta^{(t)}<\epsilon$ rather than $\|\cdot\|_\infty$ in Algorithms~\ref{alg:rmwf_p1}--\ref{alg:rmwf_p2}: norms on $\mathbb{R}^{MN}$ are equivalent, and the per-element form admits a size-independent reading, namely stabilization to within $\epsilon$ bits per coefficient on average. 
Because the real-valued allocation is subsequently rounded to even integers, sub-bit refinements have no effect on the quantized output, motivating $\epsilon=0.3$, below the rounding step. As the lower panel of Fig.~\ref{fig:GMIconv} shows, this threshold is met within $T\approx 6$ iterations for both (P1) and (P2), consistent with the GMI plateau.



\section{Conclusion}
\label{sec:conclusion}

We addressed fronthaul bit allocation for uplink C-RAN under finite-alphabet inputs, taking the post-LMMSE finite-alphabet GMI at the CU as the system objective.
Applied on the bit allocation side, the I-MMSE relation yields a fixed-point iteration termed RMWF.
Its converged allocation decomposes into a per-coefficient vessel height, a shared water level, and a finite-alphabet mercury level that captures decoder saturation.
Notably, the same structure handles both sum and per-RU fronthaul regimes, using either a single global water level or $M$ RU-local levels to govern the redistribution.
RMWF generalizes classical RWF through this saturation correction and completes a four-way correspondence between Gaussian and finite-alphabet inputs across the power and bit allocation sides.
Numerical results on a 3GPP 3D UMi C-RAN with mixed modulation orders confirmed that RMWF substantially outperforms Equal-Bit and Gaussian baselines and remains robust under antenna scaling.

\bibliographystyle{IEEEtran}
\bibliography{ref_nfc}

@string{asilomar="Proc. of IEEE Asilomar Conf. on Signals, Syst. and Comput."}

@string{jsac="IEEE J. Sel. Areas Commun."}

@string{twc="IEEE Trans. Wireless Commun."}

@string{tit="IEEE Trans. Inf. Theory"}

@string{access="IEEE Access"}

@string{network="IEEE Network Mag."}

@string{bstj="The Bell Sys. Tech. Jour."}

@article{peng2015fronthaul,
  author  = {Peng, Mugen and Wang, Chonggang and Lau, Vincent and Poor, H. Vincent},
  title   = {Fronthaul-Constrained Cloud Radio Access Networks: Insights and Challenges},
  journal = {IEEE Wireless Commun.},
  volume  = {22},
  number  = {2},
  pages   = {152--160},
  year    = {2015}
}

@article{park2014fronthaul,
  author  = {Park, Seok-Hwan and Simeone, Osvaldo and Sahin, Onur and Shamai (Shitz), Shlomo},
  title   = {Fronthaul Compression for Cloud Radio Access Networks: Signal Processing Advances Inspired by Network Information Theory},
  journal = {IEEE Signal Process. Mag.},
  volume  = {31},
  number  = {6},
  pages   = {69--79},
  year    = {2014}
}

@article{zhou2014optimized,
  author  = {Zhou, Yuhan and Yu, Wei},
  title   = {Optimized Backhaul Compression for Uplink Cloud Radio Access Network},
  journal = {IEEE J. Sel. Areas Commun.},
  volume  = {32},
  number  = {6},
  pages   = {1295--1307},
  year    = {2014}
}

@article{sanderovich2009uplink,
  author  = {Sanderovich, Amichai and Somekh, Oren and Poor, H. Vincent and Shamai (Shitz), Shlomo},
  title   = {Uplink Macro Diversity of Limited Backhaul Cellular Network},
  journal = {IEEE Trans. Inf. Theory},
  volume  = {55},
  number  = {8},
  pages   = {3457--3478},
  year    = {2009}
}

@article{park2017mixed,
  author  = {Park, Jeonghun and Park, Sungwoo and Yazdan, Ali and Heath Jr., Robert W.},
  title   = {Optimization of Mixed-{ADC} Multi-Antenna Systems for Cloud-{RAN} Deployments},
  journal = {IEEE Trans. Commun.},
  volume  = {65},
  number  = {9},
  pages   = {3962--3975},
  year    = {2017}
}

@article{gesbert2010multi,
  author  = {Gesbert, David and Hanly, Stephen and Huang, Howard and Shamai (Shitz), Shlomo and Simeone, Osvaldo and Yu, Wei},
  title   = {Multi-Cell {MIMO} Cooperative Networks: A New Look at Interference},
  journal = {IEEE J. Sel. Areas Commun.},
  volume  = {28},
  number  = {9},
  pages   = {1380--1408},
  year    = {2010}
}

@article{park2013robust,
  author={Park, Seok-Hwan and Simeone, Osvaldo and Sahin, Onur and Shamai, Shlomo},
  title={Robust and Efficient Distributed Compression for Cloud Radio Access Networks}, 
  journal={IEEE Trans. Veh. Technol.},   
  volume={62},
  number={2},
  pages={692--703},
  year={2013}
  }

@article{park2013joint,
  author  = {Park, Seok-Hwan and Simeone, Osvaldo and Sahin, Onur and Shamai (Shitz), Shlomo},
  title   = {Joint Precoding and Multivariate Backhaul Compression for the Downlink of Cloud Radio Access Networks},
  journal = {IEEE Trans. Signal Process.},
  volume  = {61},
  number  = {22},
  pages   = {5646--5658},
  year    = {2013}
}

@article{rao2015distributed,
  author  = {Rao, Xiongbin and Lau, Vincent K. N.},
  title   = {Distributed Fronthaul Compression and Joint Signal Recovery in Cloud-{RAN}},
  journal = {IEEE Trans. Signal Process.},
  volume  = {63},
  number  = {4},
  pages   = {1056--1065},
  year    = {2015}
}

@article{qiao2024meta,
  author={Qiao, Ruihua and Jiang, Tao and Yu, Wei},
  journal=twc, 
  title={Meta-Learning-Based Fronthaul Compression for Cloud Radio Access Networks}, 
  year={2024},
  volume={23},
  number={9},
  pages={11{}015-11{}029},
  doi={10.1109/TWC.2024.3378186}}

@article{bian2025towards,
  author  = {Bian, Chenghong and Shao, Yulin and G\"{u}nd\"{u}z, Deniz},
  title   = {Towards {AI}-Native Fronthaul: Neural Compression for Next{G} Cloud {RAN}},
  journal = {arXiv preprint arXiv:2506.06925},
  year    = {2025}
}

@inproceedings{balle2018variational,
  author    = {Ball\'{e}, Johannes and Minnen, David and Singh, Saurabh and Hwang, Sung Jin and Johnston, Nick},
  title     = {Variational Image Compression with a Scale Hyperprior},
  booktitle = {Int. Conf. Learn. Represent. (ICLR)},
  year      = {2018}
}

@article{zhou2016fronthaul,
  author  = {Zhou, Yuhan and Yu, Wei},
  title   = {Fronthaul Compression and Transmit Beamforming Optimization for Multi-Antenna Uplink {C-RAN}},
  journal = {IEEE Trans. Signal Process.},
  volume  = {64},
  number  = {16},
  pages   = {4138--4151},
  year    = {2016}
}

@article{han2022sparse,
  author  = {Han, Deokhwan and Park, Jeonghun and Park, Seok-Hwan and Lee, Namyoon},
  title   = {Sparse Joint Transmission for Cloud Radio Access Networks with Limited Fronthaul Capacity},
  journal = {IEEE Trans. Wireless Commun.},
  volume  = {21},
  number  = {5},
  pages   = {3395--3408},
  year    = {2022}
}

@article{yu2021deep,
  author  = {Yu, Daesung and Lee, Hoon and Park, Seok-Hwan and Hong, Seung-Eun},
  title   = {Deep Learning Methods for Joint Optimization of Beamforming and Fronthaul Quantization in Cloud Radio Access Networks},
  journal = {IEEE Wireless Commun. Lett.},
  volume  = {10},
  number  = {10},
  pages   = {2180--2184},
  year    = {2021}
}

@article{zhou2016optimal,
  author  = {Zhou, Yuhan and Xu, Yinfei and Yu, Wei and Chen, Jun},
  title   = {On the Optimal Fronthaul Compression and Decoding Strategies for Uplink Cloud Radio Access Networks},
  journal = {IEEE Trans. Inf. Theory},
  volume  = {62},
  number  = {12},
  pages   = {7402--7418},
  year    = {2016}
}

@article{liu2021duality,
  author  = {Liu, Liang and Liu, Ya-Feng and Patil, Pratik and Yu, Wei},
  title   = {Uplink-Downlink Duality Between Multiple-Access and Broadcast Channels with Compressing Relays},
  journal = {IEEE Trans. Inf. Theory},
  volume  = {67},
  number  = {11},
  pages   = {7304--7337},
  year    = {2021}
}

@article{patil2019generalized,
  author  = {Patil, Pratik and Yu, Wei},
  title   = {Generalized Compression Strategy for the Downlink Cloud Radio Access Network},
  journal = {IEEE Trans. Inf. Theory},
  volume  = {65},
  number  = {10},
  pages   = {6766--6780},
  year    = {2019}
}

@article{patil2018hybrid,
  author  = {Patil, Pratik and Dai, Binbin and Yu, Wei},
  title   = {Hybrid Data-Sharing and Compression Strategy for Downlink Cloud Radio Access Network},
  journal = {IEEE Trans. Commun.},
  volume  = {66},
  number  = {11},
  pages   = {5370--5384},
  year    = {2018}
}

@article{liu2019twotimescale,
  author  = {Liu, An and Chen, Xihan and Yu, Wei and Lau, Vincent K. N. and Zhao, Min-Jian},
  title   = {Two-Timescale Hybrid Compression and Forward for Massive {MIMO} Aided {C-RAN}},
  journal = {IEEE Trans. Signal Process.},
  volume  = {67},
  number  = {9},
  pages   = {2484--2498},
  year    = {2019}
}

@article{sohrabi2022learning,
  author  = {Sohrabi, Foad and Jiang, Tao and Yu, Wei},
  title   = {Learning Progressive Distributed Compression Strategies from Local Channel State Information},
  journal = {IEEE J. Sel. Topics Signal Process.},
  volume  = {16},
  number  = {3},
  pages   = {573--584},
  year    = {2022}
}

@book{CoverThomas,
  author    = {Cover, Thomas M. and Thomas, Joy A.},
  title     = {Elements of Information Theory},
  publisher = {Wiley-Interscience},
  address   = {Hoboken, NJ, USA},
  edition   = {2nd},
  year      = {2006}
}

@book{GershoGray,
  author    = {Gersho, Allen and Gray, Robert M.},
  title     = {Vector Quantization and Signal Compression},
  publisher = {Kluwer Academic Publishers},
  address   = {Norwell, MA, USA},
  series    = {The Springer International Series in Engineering and Computer Science},
  volume    = {159},
  year      = {1992}
}

@article{GuoShamaiVerdu,
  author  = {Guo, Dongning and Shamai (Shitz), Shlomo and Verd{\'u}, Sergio},
  title   = {Mutual Information and Minimum Mean-Square Error in {G}aussian Channels},
  journal = {IEEE Trans. Inf. Theory},
  volume  = {51},
  number  = {4},
  pages   = {1261--1282},
  year    = {2005},
  doi     = {10.1109/TIT.2005.844072}
}

@ARTICLE{LozanoTulinoVerdu,
  title={Optimum power allocation for parallel {G}aussian channels with arbitrary input distributions},
  author={Lozano, Angel and Tulino, Antonia M and Verd{\'u}, Sergio},
  journal={IEEE Trans. Inf. Theory},
  volume={52},
  number={7},
  pages={3033--3051},
  year={2006},
  publisher={IEEE}
}

@article{merhav1994information,
  author  = {N. Merhav and G. Kaplan and A. Lapidoth and S. {Shamai (Shitz)}},
  title   = {On information rates for mismatched decoders},
  journal = {{IEEE} Trans. Inf. Theory},
  volume  = {40},
  number  = {6},
  pages   = {1953--1967},
  year    = {1994}
}

@article{checko2015cloud,
  author  = {A. Checko and H. L. Christiansen and Y. Yan and L. Scolari and G. Kardaras and M. S. Berger and L. Dittmann},
  title   = {Cloud {RAN} for mobile networks---{A} technology overview},
  journal = {{IEEE} Commun. Surveys Tuts.},
  volume  = {17},
  number  = {1},
  pages   = {405--426},
  year    = {2015}
}

@book{boyd2004convex,
  author    = {S. Boyd and L. Vandenberghe},
  title     = {Convex Optimization},
  publisher = {Cambridge University Press},
  address   = {Cambridge, U.K.},
  year      = {2004}
}

@article{bennett1948spectra,
  title={Spectra of quantized signals},
  author={Bennett, William Ralph},
  journal=bstj,
  volume={27},
  number={3},
  pages={446--472},
  year={1948},
  publisher={Nokia Bell Labs}
}

@article{gishpierce1968asymptotically,
  author={Gish, H. and Pierce, J.},
  journal=tit, 
  title={Asymptotically efficient quantizing}, 
  year={1968},
  volume={14},
  number={5},
  pages={676-683},
  doi={10.1109/TIT.1968.1054193}}

@article{liu2015optimized,
  author={Liu, Liang and Zhang, Rui},
  title={Optimized uplink transmission in multi-antenna {C-RAN} with spatial compression and forward},  
  journal={IEEE Trans. Signal Process.},
  volume={63},
  number={19},
  pages={5083--5095},
  year={2015},
  publisher={IEEE}
}

@inproceedings{chene2024distributed,
  author={Ch{\^e}ne, Thomas and Othman, Ghaya Rekaya-Ben and Damen, Oussama},
  title={Distributed Decoding Scheme for Uplink {C-RAN} System with Limited Backhaul Capacity},  
  booktitle={2024 58th Asilomar Conference on Signals, Systems, and Computers},
  pages={1438--1442},
  year={2024},
  organization={IEEE}
}

@article{zhang2021quantization,
  author={Zhang, Chao and Askri, Aymen and Othman, Ghaya Rekaya-Ben and Wang, Li},
  title={Quantization-Aware Processing for Massive {MIMO} Uplink Cloud {RAN}},
  journal={IEEE Commun. Lett.},
  volume={26},
  number={2},
  pages={468--472},
  year={2022},
  publisher={IEEE}
}

@misc{oran2020wp,
  author = {{O-RAN Alliance}},
  title = {{O-RAN: Towards an Open and Smart RAN}},
  howpublished = {White Paper},
  year = {2018}
}

@inproceedings{wiffen2021distributed,
  title={Distributed dimension reduction for distributed massive {MIMO} {C-RAN} with finite fronthaul capacity},
  author={Wiffen, Fred and Chin, Woon Hau and Doufexi, Angela},
  booktitle={2021 55th Asilomar Conference on Signals, Systems, and Computers},
  pages={1228--1236},
  year={2021},
  organization={IEEE}
}

@article{wiffen2020dimension,
  author={Wiffen, Fred and Bocus, Mohammud Z and Chin, Woon Hau and Doufexi, Angela and Beach, Mark},
  title={Dimension reduction-based signal compression for uplink distributed {MIMO} {C-RAN} with limited fronthaul capacity},  
  journal={arXiv preprint arXiv:2005.12894},
  year={2020}
}

@article{ganti2000mismatched,
author={Ganti, Anand and Lapidoth, Amos and Telatar, I Emre},
title={Mismatched decoding revisited: General alphabets, channels with memory, and the wide-band limit},
journal={{IEEE} Trans. Inf. Theory},
volume={46},
number={7},
pages={2315--2328},
year={2000},
publisher={IEEE}
}

@ARTICLE{bruno2026mismatch,
  author={Zhang, Sibo and Clerckx, Bruno},
  journal=jsac, 
  title={Optimal and Suboptimal Decoders Under Finite-Alphabet Interference: {A} Mismatched Decoding Perspective}, 
  year={2026},
  volume={44},
  number={},
  pages={2779-2793},
  doi={10.1109/JSAC.2025.3644829}}

@article{schuchman1964dither,
  title={Dither signals and their effect on quantization noise},
  author={Schuchman, Leonard},
  journal={IEEE Trans. Commun. Technol.},
  volume={12},
  number={4},
  pages={162--165},
  year={1964},
  publisher={IEEE}
}

@inproceedings{carmon2012disproof,
  title={Disproof of the {S}hamai--{L}aroia conjecture},
  author={Carmon, Yair and Shamai, Shlomo and Weissman, Tsachy},
  booktitle={2012 IEEE 27th Convention of Electrical and Electronics Engineers in Israel},
  pages={1--4},
  year={2012},
  organization={IEEE}
}

\end{document}